\documentclass[12pt,a4paper]{article}

\usepackage[top=2cm, bottom=2cm, left=2cm, right=2cm]{geometry}

\bibliographystyle{plainurl}% the mandatory bibstyle

\usepackage{enumitem}
\usepackage{amsthm}
\usepackage{amsfonts}

\newtheorem{theorem}{Theorem}[section]
\newtheorem{lemma}[theorem]{Lemma}
\newtheorem{corollary}[theorem]{Corollary}
\newtheorem{proposition}[theorem]{Proposition}
\newtheorem{definition}[theorem]{Definition}

\usepackage[lined, algoruled, linesnumbered]{algorithm2e}
\usepackage{color,soul}
\definecolor{cblue}{rgb}{0.36, 0.54, 0.66}
\definecolor{cred}{rgb}{1, 0, 0}

\usepackage{graphicx}

\usepackage{hyperref,xcolor}
\definecolor{winered}{rgb}{0.5,0,0}
\hypersetup
{
    pdfborder={0 0 0},
    colorlinks=true,
    linkcolor={winered},
    urlcolor={winered},
    filecolor={winered},
    citecolor={winered},
    linktoc=all,
}

\sloppy

\author{Evangelos Kosinas\thanks{University of Ioannina, Greece.  E-mail: \texttt{ekosinas@cs.uoi.gr}. The research work was supported by the Hellenic Foundation for Research and Innovation (HFRI) under the 3rd Call for HFRI PhD
Fellowships (Fellowship Number: 6547).}}

\begin{document}

\title{Connectivity Queries under Vertex Failures: Not Optimal, but Practical} %TODO Please add

\maketitle

\begin{abstract}
We revisit once more the problem of designing an oracle for answering connectivity queries in undirected graphs in the presence of vertex failures. Specifically, given an undirected graph $G$ with $n$ vertices and $m$ edges and an integer $d_{\star}\ll n$, the goal is to preprocess the graph in order to construct a data structure $\mathcal{D}$ such that, given a set of vertices $F$ with $|F|=d\leq d_{\star}$, we can derive an oracle from $\mathcal{D}$ that can efficiently answer queries of the form ``is $x$ connected with $y$ in $G\setminus F$?''. Very recently, Long and Saranurak (FOCS 2022) provided a solution to this problem that is almost optimal with respect to the preprocessing time, the space usage, the update time, and the query time. However, their solution is highly complicated, and it seems very difficult to be implemented efficiently. Furthermore, it does not settle the complexity of the problem in the regime where $d_{\star}$ is a constant. Here, we provide a much simpler solution to this problem, that uses only textbook data structures. Our algorithm is deterministic, it has preprocessing time and space complexity $O(d_{\star}m\log n)$, update time $O(d^4 \log n)$, and query time $O(d)$. These bounds compare very well with the previous best, especially considering the simplicity of our approach. In fact, if we assume that $d_{\star}$ is a constant ($d_{\star}\geq 4$), then our algorithm provides some trade-offs that improve the state of the art in some respects. Finally, the data structure that we provide is flexible with respect to $d_{\star}$: it can be adapted to increases and decreases, in time and space that are almost proportional to the change in $d_{\star}$ and the size of the graph.
\end{abstract}

\clearpage

\section{Introduction}

%Consider the following problem. Given an undirected graph $G$, create a data structure such that you can efficiently answer queries of the form ``are vertices $x,y$ connected in $G$?''. Of course, this problem has a very simple and efficient answer: simply compute the connected components of $G$

In this paper we deal with the following problem. Given an undirected graph $G$ with $n$ vertices and $m$ edges, and a fixed integer $d_{\star}$ ($d_{\star}\ll n$), the goal is to construct a data structure $\mathcal{D}$ that can be used in order to answer connectivity queries in the presence of at most $d_{\star}$ vertex-failures. More precisely, given a set of vertices $F$, with $|F|\leq d_{\star}$, we must be able to efficiently derive an oracle from $\mathcal{D}$, which can efficiently answer queries of the form ``are the vertices $x$ and $y$ connected in $G\setminus F$?''. In this problem, we want to simultaneously optimize the following parameters: $(1)$ the construction time of $\mathcal{D}$ (preprocessing time), $(2)$ the space usage of $\mathcal{D}$, $(3)$ the time to derive the oracle from $\mathcal{D}$ given $F$ (update time), and $(4)$ the time to answer a connectivity query in $G\setminus F$.
This problem is very well motivated; it has attracted the attention of researchers for more than a decade now, and it has many interesting variations. The reader is referred to \cite{DBLP:journals/siamcomp/DuanP20} or \cite{DBLP:conf/focs/LongS22} for the details on the history of this problem and its variations.  

\subsection{Previous work}

Despite being extensively studied, it is only very recently that an almost optimal solution was provided by Long and Saranurak~\cite{DBLP:conf/focs/LongS22}. Specifically, they provided a deterministic algorithm that has $\hat{O}(m)+\tilde{O}(d_{\star}m)$ preprocessing time, uses $O(m\log^{*}n)$ space, and has $\hat{O}(d^2)$ update time and $O(d)$ query time.\footnote{The symbol $\hat{O}$ hides subpolynomial (i.e. $n^{o(1)}$) factors, and $\tilde{O}$ hides polylogarithmic factors. The hidden expressions in the time-bounds are not specified by the authors in their overview. Also, the description for the $\log^{*}n$ function that appears in the space complexity is that it ``can be substituted with any slowly growing function''. One thing that is explicitly stated, however, is that the hidden subpolynomial factors are worse than polylogarithmic. We must emphasize that the difficulty in stating the precise bounds is partly due to there being various trade-offs in the functions involved, and is partly indicative of the complexity of the techniques that are used.} 
This improves on the previous best deterministic solution by Duan and Pettie~\cite{DBLP:journals/siamcomp/DuanP20}, that has $O(mn\log n)$ preprocessing time, uses $O(d_{\star}m\log n)$ space, and has $O(d^3\log^3{n})$ update time and $O(d)$ query time. We note that there are more solutions to this problem, that optimize some parameters while sacrifising others (e.g., in the solution of Pilipczuk et al.~\cite{DBLP:conf/icalp/PilipczukSSTV22}, there is no dependency on $n$ in the update time, but this is superexponential in $d_{\star}$, and the preprocessing time is $O(mn^{2}2^{2^{O(d_{\star})}})$). We refer to Table 1 in reference~\cite{DBLP:conf/focs/LongS22} for more details on the best known (upper) bounds for this problem. We also refer to Theorem 1.2 in \cite{DBLP:conf/focs/LongS22} for a summary of known (conditional) lower bounds, that establish the optimality of~\cite{DBLP:conf/focs/LongS22}.

\subsection{Our contribution}
The bounds that we mentioned are the best known for a deterministic solution. In practice, one would prefer the solution of Long and Saranurak~\cite{DBLP:conf/focs/LongS22}, because that of Duan and Pettie~\cite{DBLP:journals/siamcomp/DuanP20} has preprocessing time $O(mn\log n)$, which can be prohibitively slow for large enough graphs. However, the solution in \cite{DBLP:conf/focs/LongS22} is highly complicated, and it seems very difficult to be implemented efficiently. This is a huge gap between theory and practice. Furthermore, the (hidden) dependence on $n$ in the time-bounds of \cite{DBLP:conf/focs/LongS22} is not necessarily optimal if we assume that $d_{\star}$ is a constant for our problem. 
We note that this is a problem with various parameters, and thus it is very difficult to optimize all of them simultaneously.

Considering that this is a fundamental connectivity problem, we believe that it is important to have a solution that is relatively simple to describe and analyze, compares very well with the best known bounds (even improves them in some respects), opens a new direction to settle the complexity of the problem, and can be readily implemented efficiently.

In this paper, we exhibit a solution that has precisely those characteristics. We present a deterministic algorithm that has preprocessing time $O(d_{\star}m\log n)$, uses space $O(d_{\star}m\log n)$, and has $O(d^4 \log n)$ update time and $O(d)$ query time.\footnote{The $\log$ factors in the space usage and the time for the updates can be improved with the use of more sophisticated 2D-range-emptiness data structures, such as those in \cite{DBLP:conf/compgeom/ChanLP11}.} Our approach is arguably the simplest that has been proposed for this problem. The previous solutions rely on sophisticated tree decompositions of the original graph. Here, instead, we basically rely on a single DFS-tree, and we simply analyze its connected components after the removal of a set of vertices. It turns out that there is enough structure to allow for an efficient solution (see Section~\ref{the idea}).
	
\begin{table*}[h]
	\centering
	\hspace*{-0cm}
\renewcommand{\arraystretch}{1.3}
\begin{tabular}{ |c|c|c|c|c| } 
 \hline
 {} & Preprocessing & Space & Update & Query\\ \hline\hline
 {Pilipczuk et al.~\cite{DBLP:conf/icalp/PilipczukSSTV22}} & $O(2^{2^{O(d_{\star})}}mn^2)$ & $O(2^{2^{O(d_{\star})}}m)$ & $-$ & $O(2^{2^{O(d_{\star})}})$\\ \hline
 {Duan and Pettie~\cite{DBLP:journals/siamcomp/DuanP20}} & $O(mn\log n)$ & $O(d_{\star}m\log n)$ & $O(d^3\log^3{n})$ & $O(d)$ \\ \hline
 {Long and Saranurak~\cite{DBLP:conf/focs/LongS22}} & $\hat{O}(m)+\tilde{O}(d_{\star}m)$ & $O(m\log^{*}{n})$ & $\hat{O}(d^2)$ & $O(d)$\\ \hline
 {\textbf{This paper}} & $O(d_{\star}m\log n)$ & $O(d_{\star}m\log n)$ & $O(d^4\log n)$ & $O(d)$\\ \hline

\end{tabular}

	\vspace{1em}
	
	\caption{Comparison of the best-known deterministic bounds. We note that $m$ can be replaced with $\bar{m}=\mathit{min}\{m,d_{\star}n\}$, using the sparsification of Nagamochi and Ibaraki~\cite{DBLP:journals/algorithmica/NagamochiI92}. The data structure of Pilipczuk et al. does not support an update phase, but answers queries directly, given a set of (at most $d_{\star}$) failed vertices and two query vertices.\label{table:bounds}}
\end{table*}

	%\vspace{-2em}

\begin{table*}[h]
	\centering
	\hspace*{-0cm}
\renewcommand{\arraystretch}{1.3}
\begin{tabular}{ |c|c|c|c|c| } 
 \hline
 {} & Preprocessing & Space & Update & Query\\ \hline\hline
 {Pilipczuk et al.~\cite{DBLP:conf/icalp/PilipczukSSTV22}} & $O(mn^2)$ & $O(m)$ & $-$ & $O(1)$\\ \hline
 {Duan and Pettie~\cite{DBLP:journals/siamcomp/DuanP20}} & $O(mn\log n)$ & $O(m\log n)$ & $O(\log^3{n})$ & $O(1)$ \\ \hline
 {Long and Saranurak~\cite{DBLP:conf/focs/LongS22}} & $\hat{O}(m)+\tilde{O}(m)$ & $O(m\log^{*}{n})$ & $\hat{O}(1)$ & $O(1)$\\ \hline
 {\textbf{This paper}} & $O(m\log n)$ & $O(m\log n)$ & $O(\log n)$ & $O(1)$\\ \hline

\end{tabular}

	\vspace{1em}
	
	\caption{Comparison of the best-known deterministic bounds, when $d_{\star}$ is a fixed (small) constant. Although the algorithm of Pilipczuk et al. has the best space and query-time bounds, it has very large preprocessing time. Our solution has the best preprocessing time, and also better update time compared to the solutions of \cite{DBLP:journals/siamcomp/DuanP20} and \cite{DBLP:conf/focs/LongS22}. Furthermore, our space usage is almost linear.\label{table:bounds2}}
\end{table*}

	%\vspace{-1em}	
	
The bounds that we provide compare very well with the previous best, especially considering the simplicity of our approach. (See Tables~\ref{table:bounds} and \ref{table:bounds2}.) 
In fact, as we can see in Table~\ref{table:bounds}, our solution is the best choice for implementations, considering that the algorithm of Long and Saranurak is very difficult to be implemented within the claimed time-bounds. 
Furthermore, if we assume that $d_{\star}$ is a constant ($d_{\star}\geq 4$), then, as we can see in Table~\ref{table:bounds2}, our algorithm provides some trade-offs, that improve the state of the art in some respects. %(We note that this is a problem with various parameters, and thus it is very difficult to optimize all of them simultaneously.)

Finally, the data structure that we provide is flexible with respect to $d_{\star}$: it can be adapted to increases and decreases, in time and space that are almost proportional to the change in $d_{\star}$ and the size of the graph (see Corollary~\ref{corollary:resize}). We do not know if any of the previous solutions has this property. It is a natural question whether we can efficiently update the data structure so that it can handle more failures (or less, and thereby free some space). As far as we know, we are the first to take notice of this aspect of the problem. 

\section{Preliminaries}
We assume that the reader is familiar with standard graph-theoretical terminology (see, e.g., \cite{DBLP:books/daglib/0030488}). The notation that we use is also standard. Since we deal with connectivity under \emph{vertex} failures, it is sufficient to consider simple graphs as input to our problem (because the existence of parallel edges does not affect the connectivity relation). However, during the update phase, we construct a multigraph that represents the connectivity relationship between some connected components after removing the failed vertices (Definition~\ref{definition:R}). The parallel edges in this graph are redundant, but they may be introduced by the algorithm that we use, and it would be costly to check for redundancy throughout.

It is also sufficient to assume that the input graph $G$ is connected. Because, otherwise, we can initialize a data structure on every connected component of $G$; the updates, for a given set of failures, are distributed to the data structures on the connected components, and the queries for pairs of vertices that lie in different connected components of $G$ are always $\mathit{false}$.
We use $G$ to denote the input graph throughout; $n$ and $m$ denote its number of vertices and edges, respectively.
For any two integers $x,y$, we use the interval notation $[x,y]$ to denote the set $\{x,x+1,\dots,y\}$. (If $x>y$, then $[x,y]=\emptyset$.)

\subsection{DFS-based concepts}
Let $T$ be a DFS-tree of $G$, with start vertex $r$ \cite{DBLP:journals/siamcomp/Tarjan72}. We use $p(v)$ to denote the parent of every vertex $v\neq r$ in $T$ ($v$ is a child of $p(v)$). For any two vertices $u,v$, we let $T[u,v]$ denote the simple tree path from $u$ to $v$ on $T$.  For every two vertices $u$ and $v$, if the tree path $T[r,u]$ uses $v$, then we say that $v$ is an ancestor of $u$ (equivalently, $u$ is a descendant of $v$). In particular, a vertex is considered to be an ancestor (and also a descendant) of itself. It is very useful to identify the vertices with their order of visit during the DFS, starting with $r\leftarrow 1$. Thus, if $v$ is an ancestor of $u$, we have $v<u$. For any vertex $v$, we let $T(v)$ denote the subtree rooted at $v$, and we let $\mathit{ND}(v)$ denote the number of descendants of $v$ (i.e., $\mathit{ND}(v)=|T(v)|$). Thus, we have that $T(v)=[v,v+\mathit{ND}(v)-1]$, and therefore we can check the ancestry relation in constant time. Two children $c$ and $c'$ of a vertex $v$ are called \emph{consecutive children} of $v$ (in this order), if $c'$ is the minimum child of $v$ with $c'>c$. Notice that, in this case, we have $T(c)\cup T(c')=[c,c'+\mathit{ND}(c')-1]$.    

A DFS-tree $T$ has the following extremely convenient property: the endpoints of every non-tree edge of $G$ are related as ancestor and descendant on $T$ \cite{DBLP:journals/siamcomp/Tarjan72}, and so we call those edges \emph{back-edges}. Our whole approach is basically an exploitation of this property, which does not hold in general rooted spanning trees of $G$ (unless they are derived from a DFS traversal, and only then \cite{DBLP:journals/siamcomp/Tarjan72}). To see why this is relevant for our purposes, consider what happens when we remove a vertex $f\neq r$ from $T$. Let $c_1,\dots,c_k$ be the children of $f$ in $T$. Then, the connected components of $T\setminus f$ are given by $T(c_1),\dots,T(c_k)$ and $T(r)\setminus T(f)$. A subtree $T(c_i)$, $i\in\{1,\dots,k\}$, is connected with the rest of the graph in $G\setminus f$ if and only if there is a back-edge that stems from $T(c_i)$ and ends in a proper ancestor of $f$. Now, this problem has an algorithmically elegant solution. Suppose that we have computed, for every vertex $v\neq r$, the \emph{lowest} proper ancestor of $v$ that is connected with $T(v)$ through a back-edge. We denote this vertex as $\mathit{low}(v)$. %\footnote{This is known as the $\mathit{low}$ point in the literature (with some variations in the definition), and it has been used in order to solve various problems on graphs with a DFS-based approach.} 
Then, we may simply check whether $\mathit{low}(c_i)<f$, in order to determine whether $T(c_i)$ is connected with $T(r)\setminus T(f)$ in $G\setminus f$. 

We extend the concept of the $\mathit{low}$ points, by introducing the $\mathit{low}_k$ points, for any $k\in\mathbb{N}$. These are defined recursively, for any vertex $v\neq r$, as follows. $\mathit{low}_1(v)$ coincides with $\mathit{low}(v)$. Then, supposing that we have defined $\mathit{low}_k(v)$ for some $k\in\mathbb{N}$, we define $\mathit{low}_{k+1}(v)$ as $\mathit{min}(\{y\mid \exists\mbox{ a back-edge } (x,y) \mbox{ such that } x\in T(v) \mbox{ and } y<v\}\setminus\{\mathit{low}_1(v),\dots,\mathit{low}_k(v)\})$. Notice that $\mathit{low}_k(v)$ may not exist for some $k\in\mathbb{N}$ (and this implies that $\mathit{low}_{k'}(v)$ does not exist, for any $k'>k$). If, however, $\mathit{low}_k(v)$ exists, then $\mathit{low}_{k'}(v)$, for any $k'<k$, also exists, and we have $\mathit{low}_1(v)<\mathit{low}_2(v)<\dots<\mathit{low}_k(v)$. Notice that the existence of $\mathit{low}_k(v)$ implies that there is a back-edge $(x,\mathit{low}_k(v))$, where $x$ is a descendant of $v$.

\begin{proposition}
\label{proposition:low_points}
Let $T$ be a DFS-tree of a simple graph $G$, and assume that the adjacency list of every vertex of $G$ is sorted in increasing order w.r.t. the DFS numbering. Suppose also that, for some $k\in\{0,\dots,n-1\}$, we have computed the $\mathit{low}_1,\dots,\mathit{low}_k$ points of all vertices (w.r.t. $T$), and the set $\{\mathit{low}_1(v),\dots,\mathit{low}_k(v)\}$ is stored in an increasingly sorted array for every $v\neq r$. Then we can compute the $\mathit{low}_{k+1}$ points of all vertices in $O(n\log(k+1))$ time.\footnote{We make the convention that $\log(1)=1$, so that the time to compute the $\mathit{low}_1$ points is $O(n)$.}
\end{proposition}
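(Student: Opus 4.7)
The plan is to process the vertices of $T$ in post-order (children before parents) and compute $\mathit{low}_{k+1}(v)$ via a DP recursion, using the key identity
\[
\mathit{low}_{k+1}(v) \;=\; \min\{y \in B(v) : y > \mathit{low}_k(v)\},
\]
where $B(v):=\{y<v : \exists\,\mbox{back-edge }(x,y)\mbox{ with }x\in T(v)\}$; this identity holds because $\mathit{low}_1(v),\dots,\mathit{low}_k(v)$ already exhaust the $k$ smallest elements of $B(v)$. Setting $\alpha:=\mathit{low}_k(v)$ (or $\alpha:=0$ when $k=0$), the set $B(v)$ decomposes as the union, over children $c$ of $v$, of $B(c)\cap[1,v-1]$, together with the set of back-edge endpoints emanating from $v$ itself. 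Hence $\mathit{low}_{k+1}(v)$ equals the minimum over one candidate contributed by each child and one contributed by $v$'s own back-edges.

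For a child $c$, the candidate is $\min\{\mathit{low}_j(c) : \mathit{low}_j(c)>\alpha\mbox{ and }\mathit{low}_j(c)<v\}$. The smallest qualifying index $j^\star$ satisfies $j^\star\leq k+1$: since $\mathit{low}_1(c),\dots,\mathit{low}_{j^\star-1}(c)$ are all strictly smaller than $\mathit{low}_{j^\star}(c)<v$, each of them must be $\leq\alpha$ (otherwise it would be a smaller valid candidate) and therefore belongs to $B(c)\cap[1,v-1]\cap[1,\alpha]\subseteq B(v)\cap[1,\alpha]=\{\mathit{low}_1(v),\dots,\mathit{low}_k(v)\}$, a set of size $k$. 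Since the post-order schedule guarantees $\mathit{low}_{k+1}(c)$ is already computed by the time we treat $v$, the sorted array $[\mathit{low}_1(c),\dots,\mathit{low}_{k+1}(c)]$ is on hand, and a single binary search finds the smallest entry $>\alpha$ in $O(\log(k+1))$ time; the result is then kept only if it is $<v$.

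The back-edges emanating from $v$ itself are handled by the same reasoning. Let $y_1<y_2<\cdots$ be the back-edge endpoints of $v$; the smallest $y_i>\alpha$ satisfies $\{y_1,\dots,y_{i-1}\}\subseteq B(v)\cap[1,\alpha]=\{\mathit{low}_1(v),\dots,\mathit{low}_k(v)\}$, so $i\leq k+1$. These endpoints are precisely the neighbors of $v$ that are strictly less than $v$ and different from $p(v)$, and they occupy the prefix of $v$'s sorted adjacency list (with $p(v)$ possibly sitting among them). Therefore the desired $y_i$ lies within the first $k+2$ positions of the adjacency list, and I find it by a binary search confined to that prefix, costing $O(\log(k+1))$.

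Taking the minimum over all $c_v+1$ candidates yields $\mathit{low}_{k+1}(v)$, which is then appended to the sorted $\mathit{low}$-array of $v$ in $O(1)$ (the strict inequality $\mathit{low}_{k+1}(v)>\mathit{low}_k(v)$ preserves the sort). Summed over all vertices, the total work is $O\!\left(\sum_v(c_v+1)\log(k+1)\right)=O(n\log(k+1))$, using $\sum_v c_v = n-1$. The base case $k=0$ falls out as a special instance with $\alpha=0$, matching the $O(n)$ bound under the convention $\log 1=1$. The only real subtlety is bounding the binary search over $v$'s own back-edges: a naive search over the full adjacency list would cost $O(\log\deg(v))$, possibly exceeding $O(\log(k+1))$, and the positional bound $i\leq k+2$ is precisely what keeps this part within budget.
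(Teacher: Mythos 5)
Your proof is correct and takes essentially the same route as the paper's: a bottom-up (post-order) dynamic program that, for each vertex $v$, computes $\mathit{low}_{k+1}(v)$ as the minimum over one binary-searched candidate per child's $\mathit{low}$-array and one from a length-$O(k)$ prefix of $v$'s own adjacency list, using the same counting argument to show the search can be confined to the first $O(k)$ positions. The only (minor) difference is that you explicitly account for $p(v)$ potentially sitting in that adjacency-list prefix and therefore widen the window to $k+2$ entries, a point the paper's proof leaves implicit.
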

\begin{proof}
For every $v\neq r$, let $\mathit{lowArray}(v)$ be the array that contains $\{\mathit{low}_1(v),\dots,\mathit{low}_k(v)\}$ in increasing order, plus one more entry which is $\mathit{null}$. Now we process the vertices in a bottom-up fashion (e.g., in reverse DFS order). We will make sure that, when we start processing a vertex, the $\mathit{low}_1,\dots,\mathit{low}_{k+1}$ points of its children are correctly computed $(*)$. 

The processing of a vertex $v\neq r$ is done as follows. First, we perform a binary search within the first $k+1$ entries of the adjacency list of $v$, in order to find the smallest vertex that is greater than $\mathit{low}_k(v)$; if it exists, we insert it in the $k+1$ entry of $\mathit{lowArray}(v)$. Now we process the children of $v$. For every child $c$ of $v$, if the $k+1$ entry of $\mathit{lowArray}(v)$ is $\mathit{null}$, then we perform a binary search in $\mathit{lowArray}(c)$, in order to find the smallest vertex that is greater than $\mathit{low}_k(v)$ and lower than $v$. If it exists, then we insert it in the $k+1$ entry of $\mathit{lowArray}(v)$. Otherwise, if the $k+1$ entry of $\mathit{lowArray}(v)$ is not $\mathit{null}$, then we perform a binary search in $\mathit{lowArray}(c)$, in order to find the smallest vertex $y$ that is greater than $\mathit{low}_k(v)$ and lower than the $k+1$ entry of $\mathit{lowArray}(v)$. If it exists, then we replace the vertex at the $k+1$ entry of $\mathit{lowArray}(v)$ with $y$. Notice that, for the processing of $v$, we need $O((1+\mathit{nChilden}_v)\log (k+1))$ time, where $\mathit{nChilden}_v$ is the number of children of $v$. Thus, the whole algorithm takes $O(n\log(k+1))$ time in total.

Now we have to argue about the correctness of this procedure. Suppose that $(*)$ is true for a vertex $v$ right when we start processing it. (If $v$ is a leaf, then $(*)$ is trivially true.) Let us also suppose that $\mathit{low}_k(v)$ is exists, because otherwise $\mathit{low}_{k+1}(v)$ does not exist and we are done. Consider the segment $y_1,\dots,y_{k+1}$ of the first $k+1$ entries of the adjacency list of $v$. Then, notice that $\mathit{low}_{k+1}(v)\leq y_{k+1}$ (where we let this inequality be trivially true if $y_{k+1}$ is $\mathit{null}$). This is because $\mathit{low}_1(v)$ is at least as low as $y_1$, therefore $\mathit{low}_2(v)$ is at least as low as $y_2$, and so on. Thus, if $\mathit{low}_{k+1}(v)$ exists in the adjacency list of $v$, it coincides with the lowest among $y_1,\dots,y_{k+1}$ that is greater than $\mathit{low}_k(v)$. Otherwise, after the search in the adjacency list of $v$, we just have that the $k+1$ entry of $\mathit{lowArray}(v)$ (if it is not $\mathit{null}$) contains a vertex that is greater than $\mathit{low}_{k+1}(v)$.
Now we check the $\mathit{low}_i$ points of the children of $v$, for $i\in\{1,\dots,k+1\}$. (By $(*)$, these are correctly computed, and they are stored in the $\mathit{lowArray}$ arrays.) First, we notice, as previously, that $\mathit{low}_{k+1}(v)$ is at least as low as the $k+1$ entry in $\mathit{lowArray}(c)$, for any child $c$ of $v$. Thus, if there is a child $c$ of $v$ such that $\mathit{lowArray}(c)$ contains $\mathit{low}_{k+1}(v)$, then this is precisely the smallest vertex in $\mathit{lowArray}(c)$ that is greater than $\mathit{low}_k(v)$, and we correctly insert it in the $k+1$ entry of $\mathit{lowArray}(v)$.

We conclude that, when we finish processing $v$, either the $k+1$ entry of $\mathit{lowArray}(v)$ is $\mathit{null}$ (from which we infer that $\mathit{low}_{k+1}(v)$ does not exist), or it contains a vertex that is greater than $\mathit{low}_k(v)$, but at least as low as any of the first $k+1$ entries of the adjacency list of $v$ that are greater than $\mathit{low}_k(v)$, or the first $k+1$ $\mathit{low}$ points of any of its children that are greater than $\mathit{low}_k(v)$. Thus, $\mathit{low}_{k+1}(v)$ has been correctly computed in the $k+1$ entry of $\mathit{lowArray}(v)$. 
\end{proof}

\begin{corollary}
\label{corollary:low_points}
For any $k\in\{1,\dots,n-1\}$, the $\mathit{low}_1,\dots,\mathit{low}_k$ points of all vertices can be computed in $O(m+kn\log k)$ time. 
\end{corollary}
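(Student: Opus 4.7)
The plan is to obtain Corollary~\ref{corollary:low_points} by iterating Proposition~\ref{proposition:low_points}. The only preprocessing required is to have $T$ at hand and to satisfy the sortedness hypothesis of the proposition. First I would compute a DFS-tree $T$ of $G$ in $O(n+m)$ time, and then sort the adjacency list of every vertex in increasing order of DFS numbering. Since DFS numbers lie in $[1,n]$, this is a bucket-sort on the edges and runs in $O(n+m)$ time. Under the standing assumption that $G$ is connected, $n\leq m+1$, so this preprocessing fits in $O(m)$.

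Next I would invoke Proposition~\ref{proposition:low_points} $k$ times, once for each level $i=1,2,\dots,k$. For $i=1$ the hypothesis is vacuous (the $\mathit{lowArray}$ of every non-root vertex starts empty), and Proposition~\ref{proposition:low_points} with the convention $\log 1 = 1$ gives the $\mathit{low}_1$ points in $O(n)$ time. For each subsequent $i\geq 2$, the previous iterations leave us with $\mathit{lowArray}(v)$ storing $\mathit{low}_1(v),\dots,\mathit{low}_{i-1}(v)$ in increasing order for every $v\neq r$, which is precisely the hypothesis Proposition~\ref{proposition:low_points} requires. Applying it once more computes the $\mathit{low}_i$ points and appends them in the $i$-th slot of each $\mathit{lowArray}$ in $O(n\log i)$ time, maintaining the sorted invariant for the next iteration.

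Finally, I would sum the costs. The total time is
\[
O(m) + \sum_{i=1}^{k} O(n\log i) \;=\; O\bigl(m + n\log(k!)\bigr) \;=\; O(m + kn\log k),
\]
where the last step is Stirling's approximation $\log(k!)=\Theta(k\log k)$. This matches the bound stated in the corollary.

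There is no real obstacle here; the proof is essentially a telescoping of the per-level bound of Proposition~\ref{proposition:low_points}. The only points requiring mild care are (i) establishing the sortedness precondition of the adjacency lists up front, which is handled by the bucket-sort, and (ii) observing that the inductive use of the proposition automatically preserves the sortedness of the $\mathit{lowArray}$ arrays, since each iteration only fills the next free slot with a value strictly greater than all previously stored entries.
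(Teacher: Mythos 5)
Your proposal is correct and matches the paper's own proof essentially verbatim: bucket-sort the adjacency lists in $O(m+n)$ time, then apply Proposition~\ref{proposition:low_points} for $i=1,\dots,k$ and sum $O(n\log 1)+\dots+O(n\log k)=O(kn\log k)$. The only difference is cosmetic (you invoke Stirling where the paper just bounds the sum directly), and your remark that each iteration preserves the sortedness of the $\mathit{lowArray}$ arrays is a correct, if implicit, detail of the paper's argument.
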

\begin{proof}
An immediate appplication of Proposition~\ref{proposition:low_points}: we first sort the adjacency lists of all vertices with bucket-sort, and then we just compute the $\mathit{low}_1,\dots,\mathit{low}_k$ points, for all vertices, in this order. This will take time $O(m+n)+O(n\log{1}+n\log{2}+\dots+n\log{k})=O(m+kn\log k)$.
\end{proof}

\section{The algorithm for vertex failures}

\subsection{Initializing the data structure}
We will need the following ingredients in order to be able to handle at most $d_{\star}$ failed vertices.

\begin{enumerate}[label={(\roman*)}]
\item{A DFS-tree $T$ of $G$ rooted at a vertex $r$. The values $\mathit{ND}$ and $\mathit{depth}$ (w.r.t. $T$) must be computed for all vertices. We identify the vertices of $G$ with the DFS numbering of $T$.}
\item{A level-ancestor data structure on $T$.}
\item{A 2D-range-emptiness data structure on the set of the back-edges of $G$ w.r.t. $T$.}
\item{The $\mathit{low}_i$ points of all vertices, for every $i\in\{1,\dots,d_{\star}\}$.}
\item{For every $i\in\{1,\dots,d_{\star}\}$, a DFS-tree $T_i$ of $T$ rooted at $r$, where the adjacency lists of the vertices are given by their children lists sorted in increasing order w.r.t. the $\mathit{low}_i$ point.}
\item{For every $i\in\{1,\dots,d_{\star}\}$, a 2D-range-emptiness data structure on the set of the back-edges of $G$ w.r.t. $T_i$.}
\end{enumerate} 

The $\mathit{depth}$ value in $(i)$ refers to the depths of the vertices in $T$. This is defined for every vertex $v$ as the size of the tree path $T[r,v]$. (Thus, e.g., $\mathit{depth}(r)=1$.) It takes $O(n)$ additional time to compute the $\mathit{depth}$ values during the DFS.

The level-ancestor data structure in $(ii)$ is used in order to answer queries of the form $\mathtt{QueryLA}(v,\delta)\equiv$ ``return the ancestor of $v$ that lies at depth $\delta$''. We use those queries in order to find the children of vertices that are ancestors of other vertices. (I.e., given that $u$ is a descendant of $v$, we want to know the child of $v$ that is an ancestor of $u$.) For our purposes, it is sufficient to use the solution in Section 3 of \cite{DBLP:journals/tcs/BenderF04}, that preprocesses $T$ in $O(n\log n)$ time so that it can answer level-ancestor queries in (worst-case) $O(1)$ time.

The 2D-range-emptiness data structure in $(iii)$ is used in order to answer queries of the form $\mathtt{2D\_range}([X_1,X_2]\times[Y_1,Y_2])\equiv$ ``is there a back-edge $(x,y)$ with $x\in[X_1,X_2]$ and $y\in[Y_1,Y_2]$?''.\footnote{The input to $\mathtt{2D\_range}$ is just the endpoints $X_1,X_2,Y_1,Y_2$ of the query rectangle; we use brackets around them, and the symbol $\times$, just for readability.} We can use a standard implementation for this data structure, that has $O(m\log n)$ space and preprocessing time complexity, and can answer a query in (worst-case) $O(\log n)$ time (see, e.g., Section 5.6 in \cite{DBLP:books/lib/BergCKO08}). The $m$ factor here is unavoidable, because the number of back-edges can be as large as $m-n+1$. However, we note that we can improve the $\log n$ factor in the space and the query time if we use a more sophisticated solution, such as \cite{DBLP:conf/compgeom/ChanLP11}.

The $\mathit{low}_1,\dots,\mathit{low}_{d_{\star}}$ points of all vertices can be computed in $O(m+d_{\star}n\log{d_{\star}})=O(m+d_{\star}n\log n)$ time (Corollary~\ref{corollary:low_points}). We obviously need $O(d_{\star}n)$ space to store them.

For $(v)$, we just perform $d_{\star}$ DFS's on $T$, starting from $r$, where each time we use a different arrangement of the children lists of $T$ as adjacency lists. This takes $O(d_{\star}n)$ time in total, but we do not need to actually store the trees. (In fact, the parent pointer is the same for all of them.) 
What we actually need here is the DFS numbering of the $i$-th DFS traversal, for every $i\in\{1,\dots,d_{\star}\}$, which we denote as $\mathit{DFS}_i$. We keep those DFS numberings stored, and so we need $O(d_{\star}n)$ additional space. The usefulness of performing all those DFS's will become clear in Section~\ref{section:update}. Right now, we only need to mention that, for every $i\in\{1,\dots,d_{\star}\}$, the ancestry relation in $T_i$ is the same as that in $T$. Thus, the $\mathit{low}_1,\dots,\mathit{low}_{d_{\star}}$ points for all vertices w.r.t. $T_i$ are the same as those w.r.t. $T$.

The 2D-range-emptiness data structures in $(vi)$ are used in order to answer queries of the form $\mathtt{2D\_range\_i}([X_1,X_2]\times[Y_1,Y_2])\equiv$ ``is there a back-edge $(x,y)$ with $x\in[X_1,X_2]$ and $y\in[Y_1,Y_2]$?'', where the endpoints of the query rectangle refer to the $\mathit{DFS}_i$ numbering, for $i\in\{1,\dots,d_{\star}\}$. Since the ancestry relation is the same for $T_i$ and $T$, we have that the queries $\mathtt{2D\_range}([X_1,X_2]\times[Y_1,Y_2])$ and $\mathtt{2D\_range\_i}([X_1,X_2]_i\times[Y_1,Y_2]_i)$ are equivalent, where the $i$ index below the brackets means that we have translated the endpoints in the $\mathit{DFS}_i$ numbering.

The construction of the 2D-range-emptiness data structures w.r.t. the DFS-trees $T_1,\dots,T_{d_{\star}}$ takes $O(d_{\star}m\log n)$ time in total. In order to keed those data structures stored, we need $O(d_{\star}m\log n)$ space. Thus, the construction and the storage of the 2D-range-emptiness data structures dominate the space-time complexity overall.

It is easy to see that the list of data structures from $(i)$ to $(vi)$ is flexible w.r.t. $d_{\star}$.
Thus, if $d_{\star}$ increases by $1$, then we need to additionally compute the $\mathit{low}_{d_{\star}+1}$ points of all vertices, the $T_{d_{\star}+1}$ DFS-tree, and the corresponding 2D-range-emptiness data structure. Computing the $\mathit{low}_{d_{\star}+1}$ points takes $O(n\log(d_{\star}+1))=O(n\log n)$ time, and demands an additional $O(n)$ space, assuming that we have sorted the adjacency lists of $G$ in increasing order, and that we have stored the $\mathit{low}_1,\dots,\mathit{low}_{d_{\star}}$ points, for every vertex, in an increasingly sorted array (see Proposition~\ref{proposition:low_points}).%, if we assume that $(1)$ the adjacency lists of $G$ are sorted in increasing order, and $(2)$ we have stored the $\mathit{low}_1,\dots,\mathit{low}_{d_{\star}}$ points, for every vertex, in an increasingly sorted array (see Proposition~\ref{proposition:low_points}). Initializing the 2D-range-emptiness data structure on $T_{d_{\star}+1}$ takes $O(m\log n)$ time, and demands $O(m\log n)$ space. Thus, we have the following:

\begin{corollary}
\label{corollary:resize}
Suppose that we have initialized our data structure for some $d_{\star}$, and we want to get a data structure for $d_{\star}+k$. Then we can achieve this in $O(km\log n)$ time, using extra $O(km\log n)$ space.
\end{corollary}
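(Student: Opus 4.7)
The plan is to iterate the single-step extension already sketched in the paragraph preceding the corollary, once for each new index $j \in \{d_{\star}+1,\dots,d_{\star}+k\}$. For each such $j$, I extend items (iv), (v) and (vi) of the existing data structure by one additional ``layer'': (a) compute $\mathit{low}_j(v)$ for every $v\neq r$ using Proposition~\ref{proposition:low_points}, (b) perform one more DFS on $T$ whose children lists are arranged in increasing order of $\mathit{low}_j$ in order to obtain the numbering $\mathit{DFS}_j$ (and store it), and (c) build a 2D-range-emptiness structure on the back-edges of $G$ with respect to $\mathit{DFS}_j$. Items (i), (ii), (iii) do not depend on $d_{\star}$, so they remain untouched.

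The cost of one such increment is exactly as analyzed before the corollary. Step (a) takes $O(n\log j) = O(n\log n)$ time by Proposition~\ref{proposition:low_points}; the precondition of the proposition is preserved inductively, because after processing layer $j$ the array $\mathit{lowArray}(v)$ stores $\mathit{low}_1(v)<\dots<\mathit{low}_j(v)$ in sorted order (appending $\mathit{low}_j(v)$ to the previously-sorted prefix is $O(1)$ since $\mathit{low}_{j-1}(v)<\mathit{low}_j(v)$ whenever the latter exists). Step (b) runs in $O(n)$ time once the $\mathit{low}_j$-sorted children lists are produced by a single bucket-sort in $O(n)$ time; only the DFS numbering is stored, costing $O(n)$ extra space. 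Step (c) is the dominant term and, by the standard construction cited in the initialization, costs $O(m\log n)$ time and $O(m\log n)$ space.

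Summing over the $k$ successive increments gives total time $O(k(m\log n + n\log n)) = O(km\log n)$ and total extra space $O(km\log n)$, as claimed. There is no real obstacle here: the only thing to verify is that each new layer can be built on top of the previous ones using exactly the same subroutines that were used during initialization, which is immediate because the tree $T$ and its DFS numbering (i.e.\ items (i)--(iii)) are fixed, and all per-layer structures (iv)--(vi) are constructed independently from the $T$-based back-edge set and the already computed $\mathit{low}_1,\dots,\mathit{low}_{j-1}$ points.
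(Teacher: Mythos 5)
Your proposal is correct and matches the paper's own justification: the corollary is obtained exactly by iterating $k$ times the single-increment procedure described in the paragraph preceding it (new $\mathit{low}_j$ points via Proposition~\ref{proposition:low_points}, a new DFS numbering $\mathit{DFS}_j$, and a new 2D-range-emptiness structure), with the $O(m\log n)$ cost of the range structure dominating each step. Your added remarks on preserving the precondition of Proposition~\ref{proposition:low_points} and on the $O(n)$ bucket-sort for the children lists are accurate fillings-in of details the paper leaves implicit.
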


If $d_{\star}$ decreases by $k$, then we just have to discard the $\mathit{low}_{d_{\star}-k+1},\dots,\mathit{low}_{d_{\star}}$ points, the $T_{d_{\star}-k+1},\dots,T_{d_{\star}}$ DFS-trees, and the corresponding 2D-range-emptiness data structures. This will free $O(km\log n)$ space.

%In everything that follows, we assume that we have initialized all the elements of our data structure. We will try to keep our notation consistent throughout.

\subsection{The general idea}
\label{the idea}
Let $F$ be a set of failed vertices. Then $T\setminus F$ may consist of several connected components, all of which are subtrees of $T$. It will be necessary to distinguish two types of connected components of $T\setminus F$. Let $C$ be a connected component of $T\setminus F$. If no vertex in $F$ is a descendant of $C$, then $C$ is called a \emph{hanging subtree} of $T\setminus F$. Otherwise, $C$ is called an \emph{internal component} of $T\setminus F$. (See Figure~\ref{figure:failed_vertices} for an illustration.) Observe that, while the number of connected components of $T\setminus F$ may be as large as $n-1$ (even if $|F|=1$), the number of internal components of $T\setminus F$ is at most $|F|$. This is an important observation, that allows us to reduce the connectivity of $G\setminus F$ to the connectivity of the internal components.

\begin{figure}[t!]\centering
\includegraphics[trim={0 15cm 0 0}, clip=true, width=0.7\linewidth]{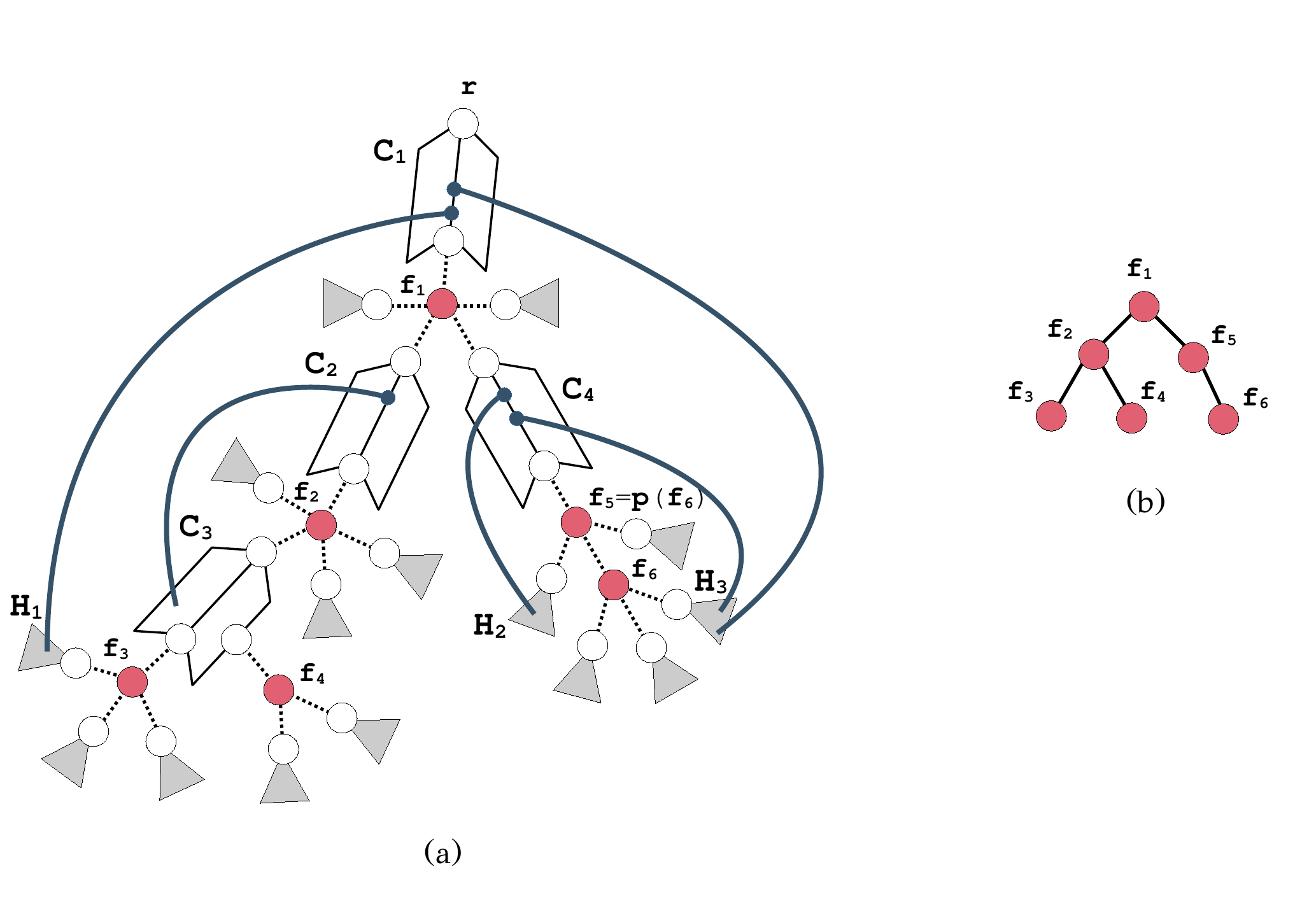}
\caption{\small{(a) A set of failed vertices $F=\{f_1,\dots,f_6\}$ on a DFS-tree $T$, and (b) the corresponding F-forest, which shows the $\mathit{parent}_F$ relation between failed vertices. Notice that $T\setminus F$ is split into several connected components, but there are only four internal components, $C_1$, $C_2$, $C_3$ and $C_4$. The hanging subtrees of $T\setminus F$ are shown with gray color (e.g., $H_1$, $H_2$ and $H_3$). The internal components $C_2$ and $C_3$ remain connected in $G\setminus F$ through a back-edge that connects them directly. $C_1$ and $C_4$ remain connected through the hanging subtree $H_3$ of $f_6$. We have $\partial(C_1)=\{f_1\}$, $\partial(C_2)=\{f_2\}$, $\partial(C_3)=\{f_3,f_4\}$ and $\partial(C_4)=\{f_5\}$. Notice that $f_6$ is the only failed vertex that is not a boundary vertex of an internal component, and it has $\mathit{parent}_F(f_6)=p(f_6)$.}}\label{figure:failed_vertices}
\end{figure}

More precisely, we can already provide a high level description of our strategy for answering connectivity queries between pairs of vertices. Let $x,y$ be two vertices of $G\setminus F$. Suppose first that $x$ belongs to an internal component $C_1$ and $y$ belongs to an internal component $C_2$. Then it is sufficient to know whether $C_1$ and $C_2$ are connected in $G\setminus F$. Otherwise, if either $x$ or $y$ lies in a hanging subtree $C$, then we can substitute $C$ with any internal component that is connected with $C$ in $G\setminus F$. If no such internal component exists, then $x$ and $y$ are connected in $G\setminus F$ if and only if they lie in the same hanging subtree.

Thus, after the deletion of $F$ from $G$, it is sufficient to make provisions so as to be able to efficiently answer the following:

\begin{enumerate}[label={(\arabic*)}]
\item{Given a vertex $x$, determine the connected component of $T\setminus F$ that contains $x$.}
\item{Given two internal components $C_1$ and $C_2$ of $T\setminus F$, determine whether $C_1$ and $C_2$ are connected in $G\setminus F$.}
\item{Given a hanging subtree $C$ of $T\setminus F$, find an internal component of $T\setminus F$ that is connected with $C$ in $G\setminus F$, or report that no such internal component exists.}
\end{enumerate} 

Actually, the most difficult task, and the only one that we provide a preprocessing for (during the update phase), is $(2)$. We explain how to perform $(1)$ and $(3)$ during the process of answering a query, in Section~\ref{section:query}. 
An efficient solution for $(2)$ is provided in Section~\ref{section:update}. 

The general idea is that, since there are at most $d=|F|$ internal components of $T\setminus F$, we can construct a graph with $O(d)$ nodes, representing the internal components of $T\setminus F$, that captures the connectivity relation among them in $G\setminus F$ (see Lemma~\ref{lemma:maintaining_components}). This is basically done with the introduction of some artificial edges between the (representatives of the) internal components. In the following subsection, we state some lemmata concerning the structure of the internal components, and their connectivity relationship in $G\setminus F$.

\subsection{The structure of the internal components}

We will use the roots of the connected components of $T\setminus F$ (viewed as rooted subtrees of $T$) as representantives of them. Now we introduce some terminology and notation. If $C$ is a connected component of $T\setminus F$, we denote its root as $r_C$. If $C$ is a hanging subtree of $T\setminus F$, then $p(r_C)=f$ is a failed vertex, and we say that $C$ is \emph{a hanging subtree of} $f$. If $C,C'$ are two distinct connected components of $T\setminus F$ such that $r_{C'}$ is an ancestor of $r_C$, then we say that $C'$ is an ancestor of $C$. Furthermore, if $v$ is a vertex not in $C$ such that $v$ is an ancestor (resp., a descendant) of $r_C$, then we say that $v$ is an ancestor (resp., a descendant) of $C$. If $C$ is an internal component of $T\setminus F$ and $f$ is a failed vertex such that $p(f)\in C$, then we say that $f$ is a boundary vertex of $C$. The collection of all boundary vertices of $C$ is denoted as $\partial(C)$. Notice that any vertex $b\in\partial(C)$ has the property that there is no failed vertex on the tree path $T[p(b),r_C]$. Conversely, a failed vertex $b$ such that there is no failed vertex on the tree path $T[p(b),r_C]$ is a boundary vertex of $C$. Thus, if $b_1,\dots,b_k$ is the collection of all the boundary vertices of $C$, then $C=T(r_C)\setminus(T(b_1)\cup\dots\cup T(b_k))$.

The following lemma is a collection of properties that are satisfied by the internal components.

\begin{lemma}
\label{lemma:ic-info}
Let $C$ be an internal component of $T\setminus F$. Then:
\begin{enumerate}[label={(\arabic*)}]
\item{Either $r_C=r$, or $p(r_C)\in F$.}
\item{For every vertex $v$ that is a descendant of $C$, there is a unique boundary vertex of $C$ that is an ancestor of $v$.}
\item{Let $f_1,\dots,f_k$ be the boundary vertices of $C$, sorted in increasing order. Then $C$ is the union of the following subsets of consecutive vertices: $[r_C,f_1-1],[f_1+\mathit{ND}(f_1),f_2-1],\dots,[f_{k-1}+\mathit{ND}(f_{k-1}),f_k-1],[f_k+\mathit{ND}(f_k),r_C+\mathit{ND}(r_C)-1]$. (We note that some of those sets may be empty.)}
\end{enumerate}
\end{lemma}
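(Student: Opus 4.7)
The plan is to prove the three properties in order, exploiting the DFS-interval property of subtrees and the definition of connected components of $T\setminus F$.

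For (1), I would argue directly from the definition. The root $r_C$ is the unique vertex of $C$ whose parent in $T$ does not lie in $C$ (or equals $r$). If $r_C\neq r$ and $p(r_C)\notin F$, then $p(r_C)$ is a vertex of $T\setminus F$ adjacent in $T$ to $r_C\in C$, so $p(r_C)$ would lie in the same connected component $C$, contradicting the choice of $r_C$ as its root. Hence $p(r_C)\in F$.

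For (2), given $v$ a descendant of $C$, consider the tree path $T[r_C,v]$. Its first vertex $r_C$ is in $C$ while its last vertex $v$ is not, so there is a first vertex $b$ along the path (traversed from $r_C$ to $v$) that does not lie in $C$. Then $p(b)\in C$. If $b\notin F$, then $b$ is in $T\setminus F$ and adjacent in $T$ to $p(b)\in C$, so $b\in C$, contradiction; hence $b\in F$. Thus $b$ is a boundary vertex of $C$ and an ancestor of $v$. For uniqueness, suppose $b_1,b_2$ are both boundary vertices ancestors of $v$; since they both lie on $T[r,v]$, one is an ancestor of the other, say $b_1$ is a proper ancestor of $b_2$. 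Then the tree path $T[p(b_2),r_C]$ passes through $b_1\in F$, contradicting the characterization (noted in the paragraph preceding the lemma) that no failed vertex lies on $T[p(b),r_C]$ for a boundary vertex $b$ of $C$.

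For (3), I would first use (2) to show that the subtrees rooted at distinct boundary vertices of $C$ are pairwise disjoint: if $b_i\in T(b_j)$ with $i\neq j$, then any vertex of $T(b_i)$ (which is not in $C$, hence a descendant of $C$) would have two distinct boundary ancestors $b_i$ and $b_j$, contradicting uniqueness. Since $T(b_i)=[b_i,b_i+\mathit{ND}(b_i)-1]$ and the $f_i$'s are sorted increasingly, disjointness forces $f_i+\mathit{ND}(f_i)\leq f_{i+1}$. Combining this with the identity $C=T(r_C)\setminus(T(f_1)\cup\cdots\cup T(f_k))$ already stated in the text, and with $T(r_C)=[r_C,r_C+\mathit{ND}(r_C)-1]$, subtracting the $k$ pairwise disjoint intervals from this consecutive range yields exactly the $k+1$ intervals listed in the lemma, some of which may be empty precisely when $f_{i+1}=f_i+\mathit{ND}(f_i)$ or when $f_1=r_C+1$ or $f_k=r_C+\mathit{ND}(r_C)-1+\cdots$ (edge cases at the ends).

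The main obstacle is really only in (2): making sure that the uniqueness argument correctly uses the DFS property that ancestor/descendant pairs lie on a single root-path, and remembering to invoke the equivalent characterization of boundary vertices stated just before the lemma. Parts (1) and (3) are essentially routine once (2) is in hand.
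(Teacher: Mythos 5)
Your proof is correct and follows essentially the same route as the paper: part (1) by the definition of the root of a component, part (2) by locating a failed vertex on $T[r_C,v]$ and using the failure-free-path characterization of boundary vertices for uniqueness, and part (3) by subtracting the subtree intervals $T(f_i)$ from $T(r_C)$. Your explicit disjointness check for the $T(f_i)$ in (3) is a small (and welcome) elaboration of a step the paper leaves implicit, but it is not a different approach.
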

\begin{proof}
$(1)$ If $r_C\neq r$, then $p(r_C)$ is defined. Since $r_C$ is the root of a connected component of $T\setminus F$, we have that $r_C\notin F$. If $p(r_C)\notin F$, then $r_C$ is connected with $p(r_C)$ in $T\setminus F$ through the parent edge $(r_C,p(r_C))$, contradicting the fact that $r_C$ is the root of a connected component of $T\setminus F$. Thus, $p(r_C)\in F$.

$(2)$ Since $v$ is a descendant of $C$, we have that $v\notin C$ and $v$ is a descendant of $r_C$. Since $v$ is a descendant of $r_C$, we have that all vertices on the tree path $T[v,r_C]$ are ancestors of $v$. (Notice that only a vertex on $T[v,r_C]$ can be both an ancestor of $v$ and a boundary vertex of $C$, because all other ancestors of $v$ are lower than $r_C$.) Since $v\notin C$, there must exist at least one failed vertex on $T[v,r_C]$. Take the lowest such failed vertex $b$. Then we have that none of the vertices on the tree path $T[p(b),r_C]$ is a failed vertex, and so $p(b)$ is connected with $r_C$ in $T\setminus F$, and therefore $b$ is a boundary vertex of $C$. 

Now let us suppose, for the sake of contradiction, that there is another vertex $b'$ on $T[v,r_C]$ that is a boundary vertex of $C$. Since $b$ is the lowest with this property, we have that $b'$ is a proper descendant of $b$. Since $b'\in\partial(C)$, we have that there cannot be a failed vertex on the tree path $T[p(b'),r_C]$, contradicting the fact that $b\in T[p(b'),r_C]$. Thus, we have that $b$ is the unique vertex in $\partial(C)$ that is an ancestor of $v$.

$(3)$ The subtree rooted at $r_C$ consists of the vertices in $[r_C,r_C+\mathit{ND}(r_C)-1]$. Since $f_1,\dots,f_k$ are the boundary vertices of $C$, we have that $C=T(r_C)\setminus(T(f_1)\cup\dots\cup T(f_k))$. Therefore, $C=[r_C,r_C+\mathit{ND}(r_C)-1]\setminus([f_1,f_1+\mathit{ND}(f_1)-1]\cup\dots\cup [f_k,f_k+\mathit{ND}(f_k)-1])$. Thus, since $f_1,\dots,f_k$  are sorted in increasing order, we have $C=[r_C,f_1-1]\cup[f_1+\mathit{ND}(f_1),f_2-1]\cup\dots\cup[f_{k-1}+\mathit{ND}(f_{k-1}),f_k-1]\cup[f_k+\mathit{ND}(f_k),r_C+\mathit{ND}(r_C)-1]$.
\end{proof}

We represent the ancestry relation between failed vertices using a forest which we call the \emph{failed vertex forest} (\emph{F-forest}, for short). The F-forest consists of the following two elements. First, for every failed vertex $f$, there is a pointer $\mathit{parent}_F(f)$ to the nearest ancestor of $f$ (in $T$) that is also a failed vertex. If there is no ancestor of $f$ that is a failed vertex, then we let $\mathit{parent}_F(f)=\bot$. And second, every failed vertex $f$ has a pointer to its list of children in the F-forest.

The F-forest can be easily constructed in $O(d^2)$ time: we just have to find, for every failed vertex $f$, the maximum failed vertex $f'$ that is a proper ancestor of $f$; then we set $\mathit{parent}_F(f)=f'$, and we append $f$ to the list of the children of $f'$ in the F-forest.

The next lemma shows how we can check in constant time whether a failed vertex belongs to the boundary of an internal component, and how to retrieve the root of this component.

\begin{lemma}
\label{lemma:f_is_boundary}
A failed vertex $f$ is a boundary vertex of an internal component if and only if $\mathit{parent}_F(f)\neq p(f)$. Now let $f$ be a boundary vertex of an internal component $C$. Then, if $\mathit{parent}_F(f)$ exists, we have that the root of $C$ is the child of $\mathit{parent}_F(f)$ that is an ancestor of $f$. Otherwise, the root of $C$ is $r$. 
\end{lemma}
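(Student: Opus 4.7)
The plan is to unpack the definitions and exploit two simple facts: first, $\mathit{parent}_F(f)$ is by definition the nearest failed proper ancestor of $f$ in $T$; second, every connected component $C$ of $T\setminus F$ is a connected subtree of $T$, so whenever $u,v\in C$ the tree path $T[u,v]$ lies entirely inside $C$ and in particular contains no failed vertex.

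For the biconditional I would argue as follows. If $f$ is a boundary vertex of an internal component $C$, then by definition $p(f)\in C$, hence $p(f)\notin F$, so the nearest failed proper ancestor of $f$ is strictly higher than $p(f)$; thus $\mathit{parent}_F(f)\neq p(f)$. Conversely, if $\mathit{parent}_F(f)\neq p(f)$, then $p(f)\notin F$ (otherwise $p(f)$ itself would be the nearest failed proper ancestor of $f$). Hence $p(f)$ lies in some connected component $C$ of $T\setminus F$, and $C$ is internal because $f\in F$ is a descendant of $C$. Since $p(f)\in C$, the vertex $f$ is a boundary vertex of $C$ by the definition of $\partial(C)$.

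For the second part I would split into the two cases given by Lemma~\ref{lemma:ic-info}(1). The key observation is that, since $r_C$ is an ancestor of every vertex in $C$ and $C$ is connected in $T$, the tree path $T[p(f),r_C]$ is contained in $C$, so none of its vertices is failed. If $r_C=r$, then $f$ has no failed proper ancestor at all (all its proper ancestors lie on $T[p(f),r]=T[p(f),r_C]$), and so $\mathit{parent}_F(f)=\bot$, matching the second case of the claim. If instead $p(r_C)\in F$, then $p(r_C)$ is a failed proper ancestor of $f$, and by the path observation it is the \emph{lowest} such; hence $\mathit{parent}_F(f)=p(r_C)$. But $r_C$ is a child of $p(r_C)$ lying on $T[p(f),r_C]$, so it is indeed the child of $\mathit{parent}_F(f)$ that is an ancestor of $f$, as claimed.

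The only non-trivial ingredient is the observation that a connected component of $T\setminus F$ is a connected subtree of $T$, which is what upgrades ``$p(r_C)$ is some failed ancestor of $f$'' to ``$p(r_C)$ is the nearest failed ancestor of $f$''. Beyond that, the lemma is a bookkeeping exercise combining the definition of $\mathit{parent}_F$ with Lemma~\ref{lemma:ic-info}(1), so I do not anticipate any real obstacle.
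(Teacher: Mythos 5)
Your proof is correct and follows essentially the same route as the paper: unfold the definitions of $\partial(C)$ and $\mathit{parent}_F$, use the fact that a connected component of $T\setminus F$ is a connected subtree of $T$ (so $T[p(f),r_C]$ contains no failed vertex), and invoke Lemma~\ref{lemma:ic-info}(1) for the case split in the second part. The only point the paper adds is an explicit dismissal of the degenerate case $f=r$, where $p(f)$ and $\mathit{parent}_F(f)$ are undefined; everything else, including your (logically equivalent) organization of the second part as a case split on $r_C=r$ versus $p(r_C)\in F$, matches the paper's argument.
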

\begin{proof}
Let $C$ be an internal component such that $f\in\partial(C)$. Then there is no failed vertex on the tree path $T[p(f),r_C]$. In particular, $p(f)\neq\mathit{parent}_F(f)$. Conversely, suppose that $\mathit{parent}_F(f)\neq p(f)$. (We can reject the case $f=r$, because then none of the expressions $\mathit{parent}_F(f),p(f)$ is defined.) If $\mathit{parent}_F(f)$ is not defined, then there is no failed vertex on the tree path $T[p(f),r]$ (i.e., on the path of the ancestors of $f$), and therefore $f$ is a boundary vertex of the internal component with root $r$. Otherwise, if $\mathit{parent}_F(f)$ is defined, then we have that $p(f)$ cannot be a failed vertex (because otherwise we would have $\mathit{parent}_F(f)=p(f)$, because $\mathit{parent}_F(f)$ is the nearest proper ancestor of $f$ that is a failed vertex). Thus, $p(f)$ belongs to a connected component of $T\setminus F$, to which $f$ is a boundary vertex.

Now let $f$ be a boundary vertex of an internal component $C$. This means that there is no failed vertex on the tree path $T[p(f),r_C]$. If $\mathit{parent}_F(f)$ exists, then it must be a proper ancestor of $r_C$. Thus, $r_C\neq r$, and therefore, by Lemma~\ref{lemma:ic-info}$(1)$, we have that $p(r_C)$ is a failed vertex. Since $\mathit{parent}_F(f)$ is the nearest ancestor of $f$ that is a failed vertex, we thus have that $\mathit{parent}_F(f)=p(r_C)$, and therefore $r_C$ is the child of $\mathit{parent}_F(f)$ that is an ancestor of $f$. Otherwise, if $\mathit{parent}_F(f)$ does not exist, this implies that there is no failed vertex on the tree path $T[p(f),r]$. Thus, $f$ is a boundary vertex of the internal component with root $r$.
\end{proof}

Thus, according to Lemma~\ref{lemma:f_is_boundary}, if $f$ is a boundary vertex of an internal component $C$ with $r_C\neq r$, we can retrieve $r_C$ in constant time using a level-ancestor query: i.e., we ask for the ancestor of $f$ (in $T$) whose depth equals that of $\mathit{parent}_F(f)+1$. We may use this fact throughout without mention.

%In order to understand the connectivity relation between the internal components, it is useful to consider the quotient graph $Q$ of $G\setminus F$ that is formed by shrinking the connected components of $T\setminus F$ into single nodes, and ignoring the self-loops. (See Figure~\ref{} for an example.) Then, the only edges of $Q$ are edges that correspond to back-edges of $T$, and either $(i)$ they connect two internal components, or $(ii)$ they connect a hanging subtree with a 

The following lemma shows that there are two types of edges that determine the connectivity relation in $G\setminus F$ between the connected components of $T\setminus F$.

\begin{lemma}
\label{lemma:components_through_back-edges}
Let $e$ be an edge of $G\setminus F$ whose endpoints lie in different connected components of $T\setminus F$. Then $e$ is a back-edge and either $(i)$ both endpoints of $e$ lie in internal components, or $(ii)$ one endpoint of $e$ lies in a hanging subtree $H$, and the other endpoint lies in an internal component $C$ that is an ancestor of $H$.  
\end{lemma}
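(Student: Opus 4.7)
The plan is to establish three things in sequence: (a) the edge $e$ must be a back-edge, (b) the component containing the higher endpoint is necessarily an internal component, and (c) if the other endpoint lies in a hanging subtree, the ancestry relation claimed in case (ii) is forced.

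For (a): since $e \in G\setminus F$, neither endpoint of $e$ is in $F$. If $e$ were a tree edge, say $e = (p(x),x)$, then the same edge survives in $T\setminus F$, placing both endpoints in the same connected component of $T\setminus F$ and contradicting the hypothesis. By the DFS-tree property, $e$ is therefore a back-edge, and we may write $e = (x,y)$ with $y$ a proper ancestor of $x$.

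For (b): let $C_x, C_y$ denote the connected components of $T\setminus F$ containing $x$ and $y$ respectively, with $C_x \neq C_y$. The tree path $T[y,x]$ must contain some failed vertex $f$, for otherwise every vertex on $T[y,x]$ belongs to $T\setminus F$ and all consecutive pairs are connected in $T\setminus F$, putting $x$ and $y$ in the same component. Since $y \notin F$, such an $f$ is a proper descendant of $y$ and, being failed, does not belong to $C_y$. Thus $f$ is a failed descendant of $C_y$, which by definition makes $C_y$ an internal component.

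For (c): if $C_x$ is also internal, we are in case (i). Otherwise $C_x$ is a hanging subtree $H$. The key observation, which is the only mildly subtle step, is that a hanging subtree $H$ satisfies $H = T(r_H)$: by definition no failed vertex is a descendant of $H$, so no path downward from $r_H$ inside $T(r_H)$ ever crosses a vertex of $F$, and hence every descendant of $r_H$ remains connected to $r_H$ in $T\setminus F$. Since $x \in H = T(r_H)$ and $y \notin H$ is an ancestor of $x$, $y$ cannot lie on $T[r_H,x]$, so $y$ is a proper ancestor of $r_H$. Then $r_{C_y}$, being an ancestor of $y$, is itself a proper ancestor of $r_H$, which by the definition given just before Lemma~\ref{lemma:ic-info} means that $C_y$ is an ancestor of $H$. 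This is precisely case (ii).

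Beyond the identification $H = T(r_H)$, the argument is essentially bookkeeping that uses only the back-edge property of DFS-trees and the definitions of internal component and hanging subtree; I do not expect a genuine obstacle.
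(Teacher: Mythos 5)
Your proof is correct and follows essentially the same route as the paper's: show $e$ is a back-edge, show that the component containing the ancestor endpoint is internal, and derive the component-level ancestry when the other endpoint sits in a hanging subtree. Your argument for internality (a failed vertex must lie on $T[y,x]$, and it is then a failed descendant of $r_{C_y}$ outside $C_y$) is a slightly more direct variant of the paper's (which instead first orders the components by their roots and then observes that $r_{C'}$ is an ancestor of the failed vertex $p(r_C)$), but the substance is the same.
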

\begin{proof}
Let $e=(x,y)$, let $C$ be the connected component of $T\setminus F$ that contains $x$, and let $C'$ be the connected component of $T\setminus F$ that contains $y$. We have that $e$ cannot be a tree-edge, because otherwise $x$ and $y$ would be connected in $T\setminus F$. Thus, $e$ is a back-edge. Since $x,y$ are the endpoints of a back-edge, they are related as ancestor and descendant.  We may assume w.l.o.g. that $r_C>r_{C'}$. We will show that this implies that $x$ is a descendant of $y$. So let us suppose, for the sake of contradiction, that $x$ is an ancestor of $y$. Since $y\in C'$, we have that $y$ is a descendant of $r_{C'}$. Since $x$ is an ancestor of $y$ that does not lie in $C'$, we have that $x$ does not lie on the tree path $T[y,r_{C'}]$. Thus, $x$ is a proper ancestor of $r_{C'}$, and so $x<r_{C'}$. Since $x\in C$, we have that $x\geq r_C$. Thus, we have $r_C\leq x<r_{C'}$, which contradicts the assumption $r_C>r_{C'}$. Thus, we have shown that $x$ is a descendant of $y$. Now, since $y$ does not lie in $C$, we have that $y$ cannot lie on the tree path $T[x,r_C]$. Therefore, since $y$ is an ancestor of $x$, it must be a proper ancestor of $r_C$. And since $y\in C'$, we have that $y$ is a descendant of $r_{C'}$. Therefore, $r_C$ is a descendant of $r_{C'}$.

Thus we have shown that $C$ is a descendant of $C'$. This implies that $C'$ cannot be a hanging subtree of $T\setminus F$. To see this, suppose the contrary. Since $r_{C'}$ is a proper ancestor of $r_C$, we have that $r_{C'}$ is an ancestor of $p(r_C)$. ($p(r_C)$ is defined, precisely because $r_C$ has a proper ancestor, and therefore $r_C\neq r$.) Notice that $p(r_C)$ is a failed vertex (otherwise, $r_C$ would be connected with $p(r_C)$ through the parent edge $(r_C,p(r_C))$, contradicting the fact that $r_C$ is the root of a connected component of $T\setminus F$). But then we have that $r_C'$ is an ancestor of a failed vertex, contradicting our supposition that $C'$ is a hanging subtree of $T\setminus F$. We conclude that, among $C$ and $C'$, only $C$ can be a hanging subtree of $T\setminus F$.
\end{proof} 

\begin{corollary}
\label{corollary:connected_components}
Let $C,C'$ be two distinct connected components of $T\setminus F$ that are connected with an edge $e$ of $G\setminus F$. Assume w.l.o.g. that $r_{C'}<r_C$. Then $C'$ is an ancestor of $C$.
\end{corollary}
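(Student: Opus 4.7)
The plan is to observe that Corollary~\ref{corollary:connected_components} is essentially an intermediate conclusion already extracted in the middle of the proof of Lemma~\ref{lemma:components_through_back-edges}. So I would simply invoke that lemma and, in one short paragraph, trace through the argument that pins down which of $r_C, r_{C'}$ is the ancestor.

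More concretely, let $e=(x,y)$ with $x\in C$ and $y\in C'$. First I would apply Lemma~\ref{lemma:components_through_back-edges} to conclude that $e$ is a back-edge, so $x$ and $y$ are in an ancestor-descendant relationship on $T$. The main step is to use the assumption $r_{C'} < r_C$ to force $y$ (not $x$) to be the ancestor endpoint. Suppose for contradiction that $x$ is an ancestor of $y$. Since $y \in C'$, $y$ is a descendant of $r_{C'}$, and since $x \notin C'$, $x$ cannot lie on the tree path $T[y, r_{C'}]$; hence $x$ is a proper ancestor of $r_{C'}$, giving $x < r_{C'}$. But $x \in C \subseteq T(r_C)$ implies $x \geq r_C$, and combining yields $r_C \leq x < r_{C'}$, contradicting $r_{C'} < r_C$.

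Therefore $y$ is an ancestor of $x$. Now I would repeat the same style of reasoning on the other side: since $y \notin C$, $y$ cannot lie on $T[x, r_C]$, so $y$ is a proper ancestor of $r_C$, i.e., $y < r_C$. On the other hand, $y \in C'$ makes $y$ a descendant of $r_{C'}$. Composing the two relations, $r_{C'}$ is an ancestor of $y$ which is an ancestor of $r_C$, so $r_{C'}$ is an ancestor of $r_C$, which by the terminology introduced just before Lemma~\ref{lemma:ic-info} means $C'$ is an ancestor of $C$.

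There is no serious obstacle here; the only thing to be careful about is that the two endpoints of $e$ really do lie in distinct components (given in the hypothesis), so the ``$\notin$'' arguments that let us conclude ``not on the tree path to the root'' are justified. Indeed, the corollary is really just the bookkeeping of ``which root is the ancestor'' that was implicit in the proof of the preceding lemma, so a clean, almost one-line, restatement plus the short contradiction above should suffice.
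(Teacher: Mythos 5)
Your proposal is correct and is essentially the same argument as the paper's. The only cosmetic difference is which labeling you fix first: you set $x\in C$, $y\in C'$ and then prove by contradiction that $y$ must be the ancestor endpoint, whereas the paper assumes $x$ is the descendant endpoint and proves by contradiction that $x\in C$; in both cases the contradiction comes from the same chain $r_C\le x<r_{C'}$ (equivalently $y<r_{C'}\le r_C$), and the closing step (composing $y$ proper ancestor of $r_C$ with $y$ descendant of $r_{C'}$) is identical.
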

\begin{proof}
Lemma~\ref{lemma:components_through_back-edges} implies that $e$ is a back-edge. Let $e=(x,y)$, and assume w.l.o.g. that $x$ is a descendant of $y$. We know that either $x\in C$ and $y\in C'$, or reversely. Let us suppose, for the sake of contradiction, that $x\in C'$ (and thus $y\in C$). This implies that $x$ is a descendant of $r_{C'}$. Thus, $x$ is a common descendant of $r_{C'}$ and $y$. This implies that $r_{C'}$ and $y$ are related as ancestor and descendant. We have that $y$ cannot be a descendant of $r_{C'}$, because this would imply that $y\in T[r_{C'},x]$ (but $y$ lies outside of $C'$). Thus, we have that $y$ is a proper ancestor of $r_{C'}$, and therefore $y<r_{C'}$. Since $r_{C'}<r_C$, this implies that $y<r_C$, and therefore $y$ cannot be a descendant of $r_C$ -- contradicting the fact that $y\in C$.

Thus we have shown that $x\in C$ and $y\in C'$. $x\in C$ implies that $x$ is a descendant of $r_C$. Thus, $x$ is a common descendant of $r_C$ and $y$. This implies that $r_C$ and $y$ are related as ancestor and descendant. We have that $y$ cannot be a descendant of $r_C$, because this would imply $y\in T[r_C,x]$ (but $y$ lies outside of $C$). Thus, $r_C$ is a descendant of $y$. Also, $y\in C'$ implies that $y$ is a descendant of $r_{C'}$. Thus, we conclude that $r_C$ is a descendant of $r_{C'}$. 
\end{proof}

%We note the following three facts about the ancestry relation between hanging subtrees, failed vertices, and internal components of $T\setminus F$, that we will use implicitly throughout. 
%
%First, for every hanging subtree $H$, we have that $p(r_H)=f$ is a failed vertex. (Otherwise, $p(r_H)$ would be connected with $f$ through the parent edge of $r_H$, contradicting the fact that $r_H$ is the root of a connected component of $T\setminus F$.) 
%
%Second, an internal component $C$ that is an ancestor of $H$ is also an ancestor of $p(r_H)$. (Because $r_C$ is a proper ancestor of $r_H$, and therefore an ancestor of $p(r_H)$.)
%
%And third, let $f$ be a failed vertex and let $C$ be an internal component that is an ancestor of $f$, with $f\notin\partial(C)$. Then the boundary vertex of $C$ that is an ancestor of $f$ is an ancestor of $f$ in the F-forest. (Because the ancestry relation between failed vertices in $T$ coincides with that in the F-forest.)

The following lemma provides an algorithmically useful criterion to determine whether a connected component of $T\setminus F$ -- a hanging subtree or an internal component -- is connected with an internal component of $T\setminus F$ through a back-edge.

\begin{lemma}
\label{lemma:ancestor_c_back-edge}
Let $C,C'$ be two connected components of $T\setminus F$ such that $C'$ is an internal component that is an ancestor of $C$, and let $b$ be the boundary vertex of $C'$ that is an ancestor of $C$. Then there is a back-edge from $C$ to $C'$ if and only if there is a back-edge from $C$ whose lower end lies in $[r_{C'},p(b)]$. 
\end{lemma}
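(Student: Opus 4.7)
The plan is to prove both directions by a careful analysis of the tree positions of the endpoints of a back-edge, using that $b \in \partial(C')$ means there is no failed vertex on $T[p(b), r_{C'}]$, and using Lemma~\ref{lemma:ic-info}(3) to pin down where $C'$ sits. Throughout I will write an edge from $C$ as $(x,y)$ with $x$ the descendant endpoint (so $x \in C$); recall by Lemma~\ref{lemma:components_through_back-edges} that any edge leaving $C$ is a back-edge. Since $b \in \partial(C')$ and $C$ is in the subtree rooted at $b$ with $b$ failed, we have $C \subseteq T(b) \setminus \{b\}$, so $x$ is a proper descendant of $b$. Also, by Lemma~\ref{lemma:ic-info}(3), $C' \cap T(b) = \emptyset$.

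For the forward direction, assume $(x,y)$ is a back-edge from $C$ to $C'$, so $y \in C'$. Since $y$ is an ancestor of $x$ and $x$ is a proper descendant of $b$, $y$ is comparable with $b$. It cannot be a proper descendant of $b$ (that would put $y \in T(b) \cap C' = \emptyset$), and it cannot equal $b$ (since $b \in F$ but $y \notin F$). Therefore $y$ is a proper ancestor of $b$, which gives $y \leq p(b)$. On the other hand $y \in C' \subseteq T(r_{C'})$ gives $y \geq r_{C'}$. Hence $y \in [r_{C'}, p(b)]$ as desired.

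For the backward direction, assume there is a back-edge $(x,y)$ from $C$ with $y \in [r_{C'}, p(b)]$; we must show $y \in C'$. As before, $x$ is a proper descendant of $b$, and $y$ is an ancestor of $x$. Since $y \leq p(b) < b$, $y$ is not a descendant of $b$, so $y$ is a proper ancestor of $b$, i.e., $y$ lies on the tree path $T[r, p(b)]$. Because $p(b) \in C'$, we have $p(b) \in T(r_{C'})$, so $r_{C'}$ lies on $T[r, p(b)]$ and $T[r, p(b)] = T[r, r_{C'}] \cup T[r_{C'}, p(b)]$. The assumption $y \geq r_{C'}$ rules out $y$ being a proper ancestor of $r_{C'}$, so $y$ must lie on $T[r_{C'}, p(b)]$. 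By the defining property of a boundary vertex, no vertex on $T[p(b), r_{C'}]$ is failed; in particular $y \notin F$ and $y$ is connected to $r_{C'}$ in $T \setminus F$, which places $y$ in $C'$.

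The only subtlety I expect is the second direction: having $y$ only in the \emph{numerical} interval $[r_{C'}, p(b)]$ is not, a priori, the same as lying on the tree path $T[r_{C'}, p(b)]$, and one must use that $y$ is an ancestor of some vertex in $T(b)$ to force $y$ onto the path $T[r, p(b)]$, and then combine this with the numerical bound $y \geq r_{C'}$ to place it on $T[r_{C'}, p(b)]$. Once that is done, the boundary-vertex property finishes the argument.
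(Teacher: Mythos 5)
Your proof is correct and follows essentially the same route as the paper's: both directions come down to locating the ancestor endpoint $y$ on the tree path $T[r_{C'},p(b)]$ (carefully distinguishing this from the numerical interval, exactly the subtlety you flag) and then invoking the boundary-vertex property of $b$. The only step you leave implicit is why the endpoint of the back-edge lying in $C$ must be the \emph{descendant} endpoint (your parenthetical ``so $x\in C$''); this does follow in one line from the facts you set up ($C\subseteq T(b)$ and $C'\cap T(b)=\emptyset$, so the $C'$-endpoint cannot be a descendant of the $C$-endpoint), whereas the paper establishes it by a separate contradiction argument involving the maximal failed vertex on the connecting tree path.
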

\begin{proof}
First, let us explain the existence of $b$. Since $C'$ is an ancestor of $C$, we have that $r_{C'}$ is an ancestor of $r_C$. Therefore, Lemma~\ref{lemma:ic-info}$(2)$ implies that there is a unique boundary vertex $b$ of $C'$ that is an ancestor of $r_C$. Thus, $b$ is an ancestor of $C$.

($\Rightarrow$) Let $e=(x,y)$ be a back-edge from $C$ to $C'$, and assume w.l.o.g. that $x$ lies in $C$. Since $e$ is a back-edge, we have that either $x$ is a descendant of $y$, or reversely. Let us suppose, for the sake of contradiction, that $y$ is a descendant of $x$. Since $x\in C$, we have that $x\geq r_C$. Since $y$ is a descendant of $x$, we have that $y>x$. Thus, $y>r_C$. Since $(x,y)$ is a back-edge from $C$ to $C'$ and $x\in C$, we have that $y\in C'$. This implies that there must be a failed vertex on the tree path $T[y,x]$. (Otherwise, $y$ would be connected with $x$, and therefore $C'$ would be connected with $C$, which is absurd.) Let $f$ be the maximum failed vertex on the tree path $T[y,x]$. Then, the connected component of $T\setminus F$ that contains $y$ has a child of $f$ as a root. But this root is $r_{C'}$, and therefore we have $r_{C'}>f>x\geq r_C$, in contradiction to the assumption that $C'$ is an ancestor of $C$. Thus we have shown that $x$ is a descendant of $y$. Since $y$ is an ancestor of $x$ that does not lie in $C$, we have that $y$ does not lie on the tree path $T[x,r_C]$, and therefore it must be a proper ancestor of $r_C$. Thus, since $y\in C'$, we have that $y$ lies on the tree path $T[p(b),r_{C'}]$. This implies that $y\in [r_{C'},p(b)]$.

($\Leftarrow$) %Let $e=(x,y)$ be a back-edge from $C$ whose lower end lies in $[r_{C'},p(b)]$. We may assume w.l.o.g. that $x\in C$. Since $b$ is an ancestor of $C$, we have that $b<r_C$. (We have strict inequality because $b$ is a failed vertex, and so it cannot lie in $C$.) This implies that $p(b)<r_C$, and therefore all vertices in $[r_{C'},p(b)]$ are lower than $r_C$. Since $y\in[r_{C'},p(b)]$, this implies that $y<x$, and therefore $y$ is an ancestor of $x$ (because $e$ is a back-edge). Since $y$ does not lie in $C$, it must lie on the tree path $T[p(r_C),r]$. Since $C'$ is an ancestor of $C$, we have that $p_{C'}$ is an ancestor of $p_C$. Thus, since $b$ is also an ancestor of $r_C$ and $y\in T[p(r_C),r]\cap[r_{C'},p(b)]$, we have that $y\in T[p(b),r_{C'}]$. This implies that $y\in C'$.   
Let $e=(x,y)$ be a back-edge from $C$ whose lower end lies in $[r_{C'},p(b)]$. We may assume w.l.o.g. that $x\in C$. Thus, we have that $y\in[r_{C'},p(b)]$, and that $y$ is an ancestor of $x$. Since $x\in C$, we have that $x$ is a descendant of $r_C$. Since $b$ is an ancestor of $C$, we have that $b$ is an ancestor of $r_C$. Thus, $x$ is a descendant of $b$, and therefore a descendant of $p(b)$. This means that the tree path $T[p(b),r_{C'}]$ consists of ancestors of $x$. Thus, since $y$ is an ancestor of $x$ with $y\in [r_{C'},p(b)]$, we have that $y\in T[p(b),r_{C'}]$. Since $b$ is a boundary vertex of $C'$, we have that all vertices on the tree path $T[p(b),r_{C'}]$ lie in $C'$. In particular, we have $y\in C'$.
\end{proof}

\begin{definition}
\label{definition:R}
Let $\mathcal{R}$ be a multigraph where $V(\mathcal{R})$ is the set of the roots of the internal components of $T\setminus F$, and $E(\mathcal{R})$ satisfies the following three properties:

\begin{enumerate}[label={(\arabic*)}]
\item{For every back-edge connecting two internal components $C$ and $C'$, there is an edge $(r_C,r_{C'})$ in $\mathcal{R}$.}
\item{Let $H$ be a hanging subtree of a failed vertex $f$, and let $C_1,\dots,C_k$ be the internal components that are connected with $H$ through a back-edge. (By Lemma~\ref{lemma:components_through_back-edges}, all of $C_1,\dots,C_k$ are ancestors of $H$.) Assume w.l.o.g. that $C_k$ is an ancestor of all $C_1,\dots,C_{k-1}$. Then $\mathcal{R}$ contains the edges $(r_{C_1},r_{C_k}),(r_{C_2},r_{C_k}),\dots,(r_{C_{k-1}},r_{C_k})$.}
\item{Every edge of $\mathcal{R}$ is given by either $(1)$ or $(2)$, or it is an edge of the form $(r_C,r_{C'})$, where $C,C'$ are two internal components that are connected in $G\setminus F$.} 
\end{enumerate}

Then $\mathcal{R}$ is called a \emph{connectivity graph of the internal components} of $T\setminus F$. The edges of $(1)$ and $(2)$ are called Type-1 and Type-2, respectively.%\footnote{We can make this typology be a partition of the edges of $(1)$ and $(2)$, if we assume that every edge is associated with an information concerning its origin, e.g., a bit that indicates whether it corresponds to two internal components directly connected through a back-edge or through the mediation of a hanging subtree.} %Thus, notice that if an edge of $\mathcal{R}$ has multiplicity greater than one, then all copies of this edge, except possibly one, are of Type-2.} 
\end{definition}

The following lemma shows that this graph captures the connectivity relationship of the internal components of $T\setminus F$ in $G\setminus F$. 

\begin{lemma}
\label{lemma:maintaining_components}
Let $\mathcal{R}$ be a connectivity graph of the internal components of $T\setminus F$. Then, two internal components $C,C'$ of $T\setminus F$ are connected in $G\setminus F$ if and only if $r_C,r_{C'}$ are connected in $\mathcal{R}$.
\end{lemma}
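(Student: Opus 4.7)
The plan is to prove the two directions separately, with the forward direction being essentially immediate from the definition of $\mathcal{R}$, and the reverse direction requiring a walk along a connecting path in $G\setminus F$ while tracking the connected components of $T\setminus F$ that are visited.

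For the ($\Leftarrow$) direction, suppose $r_C$ and $r_{C'}$ are connected in $\mathcal{R}$ and take any path $r_C=u_0,u_1,\dots,u_\ell=r_{C'}$ in $\mathcal{R}$. I claim that for each edge $(u_i,u_{i+1})$ in this path, the two corresponding internal components are connected in $G\setminus F$: Type-1 edges give a direct back-edge between them; Type-2 edges coming from a hanging subtree $H$ imply that both components are connected with $H$ (hence with each other) through back-edges in $G\setminus F$; edges of the third type in Definition~\ref{definition:R} state the connectivity outright. Transitivity of connectivity in $G\setminus F$ then gives that $C$ and $C'$ are connected.

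For the harder ($\Rightarrow$) direction, assume $C,C'$ are connected in $G\setminus F$ and fix a path $P=v_0,v_1,\dots,v_s$ in $G\setminus F$ with $v_0\in C$ and $v_s\in C'$. Record the sequence $D_0,D_1,\dots,D_t$ of distinct connected components of $T\setminus F$ visited by $P$, so that $D_0=C$, $D_t=C'$, and for each $j$ there is an edge of $P$ whose endpoints lie in $D_j$ and $D_{j+1}$. By Lemma~\ref{lemma:components_through_back-edges}, each such transition edge is a back-edge, and it either connects two internal components or connects a hanging subtree to an internal component that is an ancestor of it; in particular, two consecutive $D_j,D_{j+1}$ cannot both be hanging subtrees. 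Therefore, if $j_1<j_2<\dots<j_r$ are the indices at which $D_{j_k}$ is an internal component (so $j_1=0$, $j_r=t$), then $j_{k+1}-j_k\in\{1,2\}$ for every $k$.

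The remaining step is to argue that $r_{D_{j_k}}$ and $r_{D_{j_{k+1}}}$ lie in the same connected component of $\mathcal{R}$ in each of these two cases. When $j_{k+1}=j_k+1$ this is immediate from property~$(1)$ of Definition~\ref{definition:R}, since a direct back-edge between the two internal components yields an edge of $\mathcal{R}$. When $j_{k+1}=j_k+2$, the intermediate component $H:=D_{j_k+1}$ is a hanging subtree, and both $D_{j_k}$ and $D_{j_{k+1}}$ are internal components that are joined to $H$ by back-edges; property~$(2)$ of Definition~\ref{definition:R}, applied to $H$, then places both roots in the same connected component of $\mathcal{R}$ (they are both connected to the root $r_{C_k}$ of the highest ancestor internal component reaching $H$). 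Chaining these local connectivities through $k=1,\dots,r-1$ gives that $r_C=r_{D_{j_1}}$ and $r_{C'}=r_{D_{j_r}}$ are connected in $\mathcal{R}$. The main point requiring care is the structural observation that two consecutive hanging subtrees cannot appear in the sequence, as this is what permits the clean case analysis into Type-1 and Type-2 edges.
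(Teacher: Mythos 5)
Your proof is correct and follows essentially the same route as the paper's: the ($\Leftarrow$) direction reads off connectivity in $G\setminus F$ edge-type by edge-type, and the ($\Rightarrow$) direction extracts the sequence of distinct components of $T\setminus F$ along a connecting path, invokes Lemma~\ref{lemma:components_through_back-edges} to rule out two consecutive hanging subtrees, and bridges each hanging-subtree gap via the Type-2 edges to the distinguished (lowest) internal component reaching that subtree. Your only loose phrasing is that when one of $D_{j_k},D_{j_{k+1}}$ is itself that distinguished component, it is not "connected to $r_{C_k}$" by a Type-2 edge but simply equals it — the paper handles this case explicitly, and the conclusion is unaffected.
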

\begin{proof}
($\Rightarrow$) Let $C,C'$ be two internal components of $T\setminus F$ that are connected in $G\setminus F$. This means that there is a sequence $C_1,\dots,C_k$ of pairwise distinct connected components of $T\setminus F$, and a sequence of back-edges $e_1,\dots,e_{k-1}$, such that: $C_1=C$, $C_k=C'$, and $e_i$ connects $C_i$ and $C_{i+1}$, for every $i\in\{1,\dots,k-1\}$. By Lemma~\ref{lemma:components_through_back-edges}, we have that, for every $i\in\{1,\dots,k-1\}$, either $(1)$ $C_i$ and $C_{i+1}$ are internal components that are related as ancestor and descendant, or $(2)$ one of $C_i,C_{i+1}$ is a hanging subtree, and the other is an internal component that is an ancestor of it. 

Let $i$ be an index in $\{1,\dots,k-1\}$. If $(1)$ is true, then there is a Type-1 edge $(r_{C_i},r_{C_{i+1}})$ in $\mathcal{R}$. 
If $(2)$ is true, then one of $C_i,C_{i+1}$ is a hanging subtree. Let us assume that $C_i$ is a hanging subtree. Since there is a back-edge connecting $C_i$ with $C_{i+1}$, we may consider the lowest internal component $\tilde{C}$ that is an ancestor of $C_i$ and is connected with it through a back-edge. If $C_{i+1}=\tilde{C}$, then we imply nothing at this point. Otherwise, we have that $\mathcal{R}$ contains the Type-2 edge $(r_{C_{i+1}},r_{\tilde{C}})$. Now, since $C_1$ is an internal component, we have that $C_i\neq C_1$, and therefore $C_{i-1}$ is defined. By Lemma~\ref{lemma:components_through_back-edges}, we have that $C_{i-1}$ is also an internal component, that is connected with $C_i$ through a back-edge. Again, if $C_{i-1}=\tilde{C}$, then we imply nothing at this point. Otherwise, we have that $\mathcal{R}$ contains the Type-2 edge $(r_{C_{i-1}},r_{\tilde{C}})$. Thus, there are three possibilities to consider: either $C_{i-1}=\tilde{C}$ and $C_{i+1}\neq\tilde{C}$, or $C_{i-1}\neq\tilde{C}$ and $C_{i+1}=\tilde{C}$, or $C_{i-1}\neq\tilde{C}$ and $C_{i+1}\neq\tilde{C}$. In any case, we can see that $r_{C_{i-1}}$ is connected with $r_{C_{i+1}}$ in $\mathcal{R}$ -- either directly, or through $r_{\tilde{C}}$. Similarly, if we assume that $C_{i+1}$ is a hanging subtree (and $C_i$ is an internal component), then we have that $r_{C_i}$ is connected with $r_{C_{i+2}}$ in $\mathcal{R}$.

From all this we infer that, if $C_{i(1)},\dots,C_{i(t)}$ is the subsequence of $C_1,\dots,C_k$ that consists of the internal components, then $r_{C_{i(1)}},\dots,r_{C_{i(t)}}$ are connected in $\mathcal{R}$. In particular, we have that $r_C$ and $r_{C'}$ are connected in $\mathcal{R}$. 

($\Leftarrow$) Let $e=(r_C,r_{C'})$ be an edge of $\mathcal{R}$. If $e$ is a Type-1 edge, then there is a back-edge that connects $C$ and $C'$ in $G\setminus F$. Otherwise, there is a hanging subtree of $T\setminus F$ that is connected with both $C$ and $C'$ in $G\setminus F$ (through back-edges). In any case, we have that $C,C'$ are connected in $G\setminus F$. Since this is true for any edge of $\mathcal{R}$, we conclude that, if $r_C,r_{C'}$ are two vertices connected in $\mathcal{R}$, then $C,C'$ are connected in $G\setminus F$.
\end{proof}

\subsection{Handling the updates: construction of a connectivity graph for the internal components of $T\setminus F$}
\label{section:update}

Given a set of failed vertices $F$, with $|F|=d\leq d_{\star}$, we will show how we can construct a connectivity graph $\mathcal{R}$ for the internal components of $T\setminus F$, using $O(d^4)$ calls to 2D-range-emptiness queries. Recall that $V(\mathcal{R})$ is the set of the roots of the internal components of $T\setminus F$.
  
Algorithm~\ref{algorithm:type-1-edges} shows how we can find all Type-1 edges of $\mathcal{R}$. The idea is basically to perform 2D-range-emptiness queries for every pair of internal components, in order to determine the existence of a back-edge that connects them. More precisely, we work as follows. Let $C$ be an internal component of $T\setminus F$. Then it is sufficient to check every ancestor component $C'$ of $C$, in order to determine whether there is a back-edge from $C$ to $C'$ (see Corollary~\ref{corollary:connected_components}). Let $f_1,\dots,f_k$ be the boundary vertices of $C$, sorted in increasing order. Let also $f'$ be the boundary vertex of $C'$ that is an ancestor of $C$, and let $I=[r_{C'},p(f')]$. Then we perform 2D-range-emptiness queries for the existence of a back-edge on the rectangles $[r_C,f_1-1]\times I,[f_1+\mathit{ND}(f_1),f_2-1]\times I,\dots,[f_k+\mathit{ND}(f_k),r_C+\mathit{ND}(r_C)-1]\times I$. We know that there is a back-edge connecting $C$ and $C'$ if and only if at least one of those queries is positive (see Lemma~\ref{lemma:ic-info}$(3)$ and Lemma~\ref{lemma:ancestor_c_back-edge}). If that is the case, then we add the edge $(r_C,r_{C'})$ to $\mathcal{R}$.

Observe that the total number of 2D-range-emptiness queries that we perform is $O(d^2)$, because every one of them corresponds to a triple $(C,f,C')$, where $C,C'$ are internal components, $C'$ is an ancestor of $C$, and $f$ is a boundary vertex of $C$, or $r_C$. And if $C_1,\dots,C_k$ are all the internal components of $T\setminus F$, then the number of those triples is bounded by $(|\partial(C_1)|+1)\cdot d+\dots+(|\partial(C_k)|+1)\cdot d = (|\partial(C_1)|+\dots+|\partial(C_k)|+k)\cdot d \leq (d+k)\cdot d \leq (d+d)\cdot d = O(d^2)$.

\begin{algorithm}[!h]
\caption{\textsf{Compute all Type-1 edges to construct a connectivity graph $\mathcal{R}$ for the internal components of $T\setminus F$}}
\label{algorithm:type-1-edges}
\LinesNumbered
\DontPrintSemicolon
\ForEach{internal component $C$ of $T\setminus F$}{
\label{line:iterate_c}
  let $f_1,\dots,f_k$ be the boundary vertices of $C$, sorted in increasing order\;
  \label{line:boundary_c}
\textcolor{red}{\tcp{process every internal component $C'$ that is an ancestor of $C$}}
  set $f'\leftarrow p(r_C)$\;
  \While{$f'\neq\bot$}{
  \label{line:while_ancestors}
    \If{$p(f')\neq\mathit{parent}_F(f')$}{
    \label{line:apply_parent_criterion}
    let $C'$ be the internal component of $T\setminus F$ with $f'\in\partial(C')$\;
    set $I\leftarrow [r_{C'},p(f')]$\;
    \If{at least one of the following queries is positive:\\ $\mathtt{2D\_range}([r_C,f_1-1]\times I)$\\ $\mathtt{2D\_range}([f_1+\mathit{ND}(f_1),f_2-1]\times I)$\\$\dots$\\$\mathtt{2D\_range}([f_{k-1}+\mathit{ND}(f_{k-1}),f_k-1]\times I)$\\$\mathtt{2D\_range}([f_k+\mathit{ND}(f_k),r_C+\mathit{ND}(r_C)-1]\times I)$}{
      add the Type-1 edge $(r_C,r_{C'})$ to $\mathcal{R}$\;
    }}
    $f'\leftarrow\mathit{parent}_F(f')$\;
  }
}
\end{algorithm}

\begin{proposition}
Algorithm~\ref{algorithm:type-1-edges} correctly computes all Type-1 edges to construct a connectivity graph for the internal components of $T\setminus F$. The running time of this algorithm is $O(d^2\log n)$.
\end{proposition}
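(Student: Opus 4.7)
My plan is to split the argument into correctness and running time, with correctness itself split into a ``completeness of enumeration'' argument and a ``per-pair correctness'' argument.

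For correctness, I would first show that the outer and inner loops together enumerate exactly the ordered pairs $(C,C')$ of internal components of $T\setminus F$ in which $C'$ is an ancestor of $C$. The outer loop visits every internal component. For a fixed $C$ with $r_C\neq r$ (the case $r_C=r$ is trivial, since then $p(r_C)$ is undefined and the while-loop is skipped), Lemma~\ref{lemma:ic-info}$(1)$ gives that $p(r_C)\in F$, so the inner walk that starts at $f'=p(r_C)$ and follows $\mathit{parent}_F$ pointers visits precisely the failed ancestors of $r_C$. By Lemma~\ref{lemma:f_is_boundary}, the test $p(f')\neq \mathit{parent}_F(f')$ selects exactly those $f'$ that are boundary vertices of some internal component $C'$, and the same lemma lets us recover $r_{C'}$ in $O(1)$ time from a level-ancestor query. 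Conversely, for every ancestor internal component $C'$ of $C$, Lemma~\ref{lemma:ic-info}$(2)$ applied to $v=r_C$ gives a unique $b\in \partial(C')$ that is a failed ancestor of $r_C$, so $b$ is visited by the F-forest walk and is correctly recognized as a boundary vertex. Hence the pair $(C,C')$ is processed exactly once.

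For each such pair $(C,C')$, I would invoke Lemma~\ref{lemma:ic-info}$(3)$ to write $C$ as the disjoint union of the consecutive-vertex intervals $[r_C,f_1-1], [f_1+\mathit{ND}(f_1),f_2-1],\ldots,[f_k+\mathit{ND}(f_k),r_C+\mathit{ND}(r_C)-1]$, and Lemma~\ref{lemma:ancestor_c_back-edge} to conclude that $C$ and $C'$ are joined by a back-edge if and only if some back-edge has its upper endpoint in $C$ and its lower endpoint in $I=[r_{C'},p(f')]$. This is exactly the disjunction of the performed $\mathtt{2D\_range}$ queries, so we add $(r_C,r_{C'})$ precisely when a Type-1 edge between $C$ and $C'$ should be added. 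Finally, Corollary~\ref{corollary:connected_components} guarantees that every Type-1 edge involves one component that is an ancestor of the other, so restricting attention to ancestor pairs misses no Type-1 edge.

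For the running time, I would count the $\mathtt{2D\_range}$ calls. For a fixed internal component $C$, the algorithm performs $|\partial(C)|+1$ queries per processed ancestor internal component $C'$, and there are at most $d$ internal components in total. Summing over $C$ and using that $\sum_C |\partial(C)|\leq d$ (every failed vertex lies in at most one $\partial(C)$) together with $\sum_C 1\leq d$, the total number of queries is $\sum_C (|\partial(C)|+1)\cdot d \leq 2d\cdot d=O(d^2)$. Each query costs $O(\log n)$ in the 2D-range-emptiness data structure of item $(iii)$, while all bookkeeping (F-forest traversal, level-ancestor queries, interval arithmetic via $\mathit{ND}$) takes $O(1)$ per inner iteration. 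This yields the claimed $O(d^2\log n)$ bound. The only step with any real content is the enumeration argument, combining Lemmas~\ref{lemma:ic-info}$(2)$ and \ref{lemma:f_is_boundary} to show that the F-forest walk from $p(r_C)$ hits every ancestor internal component of $C$ exactly once; everything else is a direct application of the structural lemmas already established.
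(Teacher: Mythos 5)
Your argument is essentially the same as the paper's, and the structural reasoning is correct: Lemma~\ref{lemma:ic-info}(2) applied at $r_C$ (the paper applies it at $p(r_C)$, which is equivalent) shows that the walk up the F-forest from $p(r_C)$ hits each ancestor internal component exactly once, the boundary test from Lemma~\ref{lemma:f_is_boundary} is the criterion used in Line~\ref{line:apply_parent_criterion}, the decomposition of $C$ into intervals (Lemma~\ref{lemma:ic-info}(3)) combined with Lemma~\ref{lemma:ancestor_c_back-edge} justifies the batch of $\mathtt{2D\_range}$ queries, and the $\sum_C(|\partial(C)|+1)\cdot d=O(d^2)$ count matches the main-text analysis.

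The one thing you gloss over, and which occupies most of the paper's actual proof, is the implementation of Lines~\ref{line:iterate_c}--\ref{line:boundary_c}: how does one efficiently list the internal components and, for each, its sorted boundary $f_1,\dots,f_k$? Your running-time paragraph asserts that ``all bookkeeping \dots takes $O(1)$ per inner iteration,'' but producing a sorted $\partial(C)$ is not a per-inner-iteration $O(1)$ operation. The paper handles this by sorting, for each failed vertex $f$, its children list in the F-forest (total $O(d\log d)$), and then scanning that sorted list while using level-ancestor queries to group children by the $T$-child of $f$ that is their ancestor; each maximal group whose $T$-child $c$ is not itself failed yields the internal component $C$ rooted at $c$ together with $\partial(C)$. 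This $O(d\log d)$ cost is dominated by $O(d^2\log n)$, so your final bound is unaffected, but a complete proof of the stated running time needs to say something concrete here rather than fold it into ``bookkeeping.'' Everything else in your write-up is sound.
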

\begin{proof}
First, we need to provide a method to efficiently iterate over the collection of the internal components and their boundary vertices (Lines~\ref{line:iterate_c} and \ref{line:boundary_c}), and then we have to prove that the \textbf{while} loop in Line~\ref{line:while_ancestors} is sufficient to access all internal components that are ancestors of $C$. Then, the correctness and the $O(d^2\log n)$ time-bound follow from the analysis above (in the main text).

Every internal component $C$ of $T\setminus F$ is determined by its root $r_C$. By Lemma~\ref{lemma:ic-info}$(1)$, we have that either $r_C=r$, or $p(r_C)$ is a failed vertex. If $r_C=r$ then $C$ has no ancestor internal components, and therefore we may ignore this case. So let $p(r_C)=f$ be a failed vertex. Then, by Lemma~\ref{lemma:f_is_boundary}, the boundary vertices of $C$ are given by the children of $f$ in the F-forest that are descendants (in $T$) of $r_C$. 

Thus, we may work as follows. First, we sort the children of every failed vertex $f$ in the F-forest in increasing order. This takes $O(d\log d)$ time in total. Then, for every failed vertex $f$, we traverse its list of children $L$ (in the F-forest) in order. For every maximal segment $S$ of $L$ that consists of descendants of the same child $c$ of $f$ in $T$, we know that either $c$ is the root of an internal component with boundary $S$, or $c\in F$. (For every $f'\in L$ that we meet, we can use a level-ancestor query to find in constant time the child of $f$ in $T$ that is an ancestor of $f'$.) Thus, Lines~\ref{line:iterate_c} and \ref{line:boundary_c} need $O(d\log d)$ time in total.

Now let $C$ be an internal component with $r_C\neq r$. Then we have that $p(r_C)$ is a failed vertex. Now let $C'$ be an internal component that is a proper ancestor of $C$. This means that $r_{C'}$ is a proper ancestor of $r_C$, and therefore $r_{C'}$ is an ancestor of $f=p(r_C)$. Then, by Lemma~\ref{lemma:ic-info}$(2)$ we have that there is a boundary vertex $b$ of $C'$ that is an ancestor of $f$ (in $T$). Since the set of failed vertices that are ancestors of $f$ (in $T$) coincide with the set of ancestors of $f$ in the F-forest, we have that the \textbf{while} loop in Line~\ref{line:while_ancestors} will eventually reach $b$. Then we can retrieve $C'$ (more precisely: $r_{C'}$) in constant time using Lemma~\ref{lemma:f_is_boundary}. The purpose of Line~\ref{line:apply_parent_criterion} is to apply the criterion of Lemma~\ref{lemma:f_is_boundary}, in order to check whether $f'$ is a boundary vertex of an internal component.
\end{proof}

The construction of Type-2 edges is not so straightforward. For every failed vertex $f$, and every two internal components $C$ and $C'$, such that $C$ is an ancestor of $f$ and $C'$ is an ancestor of $C$, we would like to know whether there is a hanging subtree of $f$, from which stem a back-edge $e$ with an endpoint in $C$ and a back-edge $e'$ with an endpoint in $C'$. The straightforward way to determine this is the following. Let $b$ (resp., $b'$) be the boundary vertex of $C$ (resp., $C'$) that is an ancestor of $f$. Then, for every hanging subtree of $f$ with root $c$, we perform 2D-range-emptiness queries on the rectangles $[c,c+\mathit{ND}(c)-1]\times[r_C,p(b)]$ and $[c,c+\mathit{ND}(c)-1]\times[r_{C'},p(b')]$. If both queries are positive, then we know that $C$ and $C'$ are connected in $G\setminus F$ through the hanging subtree with root $c$. 

Obviously, this method is not efficient in general, because the number of hanging subtrees of $f$ can be very close to $n$. However, it is the basis for our more efficient method. The idea is to perform a lot of those queries at once, for large batches of hanging subtrees. More specifically, we perform the queries on \emph{consecutive} hanging subtrees of $f$ (i.e., their roots are consecutive children of $f$), for which we know that the answer is positive on $C'$ (i.e., for every one of those subtrees, there certainly exists a back-edge that connects it with $C'$). In order for this idea to work, we have to rearrange properly the lists of children of all vertices. (Otherwise, the hanging subtrees of $f$ that are connected with $C'$ through a back-edge may not be consecutive in the list of children of $f$.) In effect, we maintain several DFS trees (specifically: $d_{\star}$), and several 2D-range-emptiness data structures, one for every different arrangement of the children lists.

Let us elaborate on this idea.
Let $H$ be a hanging subtree of $f$ that connects some internal components, and let $C'$ be the lowest one among them (i.e., the one that is an ancestor of all the others). Then we have that the lower ends of all back-edges that stem from $H$ and end in ancestors of $C'$ are failed vertices that are ancestors of $C'$. Thus, since there are at most $d$ failed vertices in total, we have that at least one among $\mathit{low}_1(r_H),\dots,\mathit{low}_d(r_H)$ is in $C'$. %\footnote{We can be a little more optimistic here: if there are $k$ failed vertices that are ancestors of $C'$, then at least one among $\mathit{low}_1(r_H),\dots,\mathit{low}_k(r_H)$ is in $C'$. However, we will not use this information in our time-analysis.} 
In other words, $r_H$ is one of the children of $f$ whose $\mathit{low}_i$ point is in $C'$, for some $i\in\{1,\dots,d\}$.
Now, assume that for every $i\in\{1,\dots,d_{\star}\}$, we have a copy of the list of the children of $f$ sorted in increasing order w.r.t. the $\mathit{low}_i$ point; let us call this list $L_i(f)$, and let it be stored in way that allows for binary search w.r.t. the $\mathit{low}_i$ point. Then, for every internal component $C$ that is an ancestor of $f$, we can find the segment $S_i(C)$ of $L_i(f)$ that consists of the children of $f$ whose $\mathit{low}_i$ point lies in $C$, by searching for the leftmost and the righmost child in $L_i(f)$ whose $\mathit{low}_i$ point lies in $[r_C,p(b)]$, where $b$ is the boundary vertex of $C$ that is an ancestor of $f$.
%
%Now we have that every hanging subtree $H'$ of $f$ that is connected with $C$ through a back-edge has its root in one of $S_1,\dots,S_d$. Let $i\in\{1,\dots,d\}$ be such that $r_H\in S_i$. Then we know that every child of $f$ that lies in $S_i$ and is the root of a hanging subtree $H'$ of $f$ has the property that $H'$ is also connected with $C$ through a back-edge (because $\mathit{low}_i(r_{H'})\in C$). For every such maximal subsegment of $S_i$ (that consists of children of $f$ that are roots of hanging subtrees of $f$), we would like to be able to perform 2D-range-emptiness queries as above, in order to determine the connectivity (in $G\setminus F$) of $C$ with all internal components $C'$ that are ancestors of $f$ and descendants of $C$. This would be possible if we had the guarantee that those subtrees are consecutive subtrees of $f$. We can accommodate for that during the preprocessing phase: for every $i\in\{1,\dots,d_{\star}\}$, we perform a DFS of $T$, starting from $r$, where the adjacency list of every vertex $v$ is given by $L_i(v)$.\footnote{I.e., it is necessary that the neighbors of $v$ appear in the same order as in $L_i(v)$. It is not necessary to include $p(v)$ in the adjacency list of $v$, because this will have been already explored by the time we reach $v$ (because we start the DFS again from $r$, and we get the same children list for every vertex - just rearranged).} 
%

Now let $i\in\{1,\dots,d\}$ be such that $\mathit{low}_i(r_H)\in C'$. Then we have that $r_H\in S_i(C')$. Furthermore, we have that every child of $f$ that lies in $S_i(C')$ and is the root of a hanging subtree $H'$ of $f$ has the property that $H'$ is also connected with $C'$ through a back-edge. Thus,  we would like to be able to perform 2D-range-emptiness queries as above on the subset $S$ of $S_i(C')$ that consists of roots of hanging subtrees, in order to determine the connectivity (in $G\setminus F$) of $C'$ with all internal components $C$ that are ancestors of $f$ and descendants of $C'$. We could do this efficiently if we had the guarantee that $S$ consists of large segments of consecutive children of $f$. We can accommodate for that during the preprocessing phase: for every $i\in\{1,\dots,d_{\star}\}$, we perform a DFS of $T$, starting from $r$, where the adjacency list of every vertex $v$ is given by $L_i(v)$.\footnote{I.e., it is necessary that the vertices in the adjacency list of $v$ appear in the same order as in $L_i(v)$.} %It is not necessary to include $p(v)$ in the adjacency list of $v$, because this will have been already explored by the time we reach $v$ (because we start the DFS again from $r$, and we get the same children list for every vertex - just rearranged).}
Let $T_i$ be the resulting DFS tree, and let $\mathit{DFS}_i$ be the corresponding DFS numbering. Then, with the DFS numbering of $T_i$, we initialize a data structure $\mathit{2D\_range\_i}$, for answering 2D-range-emptiness queries for back-edges w.r.t. $T_i$ in subrectangles of $[1,n]\times[1,n]$. 

Now let us see how everything is put together. Let $H$ be a hanging subtree of $f$ that connects two internal components $C_1$ and $C_2$, and let $b_1$ and $b_2$ be the boundary vertices of $C_1$ and $C_2$, respectively, that are ancestors of $f$. Let $C'$ be the lowest internal component that is connected through a back-edge with $H$. Then there is an $i\in\{1,\dots,d\}$ such that $\mathit{low}_i(r_H)\in C'$. Let $S$ be the maximal segment of $S_i(C')$ that contains $r_H$ and consists of roots of hanging subtrees, let $L$ be the minimum of $S$ and let $R$ be the maximum of $S$.\footnote{Notice that, due to the construction of $T_i$, we have that $\mathit{DFS}_i(L)$ and $\mathit{DFS}_i(R)$ are also the minimum and the maximum, respectively, of $\mathit{DFS}_i(S)$.} Then the 2D-range-emptiness queries on $[L,R+\mathit{ND}(R)-1]_i\times[r_{C_1},p(b_1)]_i$ and $[L,R+\mathit{ND}(R)-1]_i\times[r_{C_2},p(b_2)]_i$ with $\mathit{2D\_range\_i}$ are both positive, and so we will add the edges $(r_{C_1},r_{C'})$ and $(r_{C_2},r_{C'})$ in $\mathcal{R}$. This will maintain in $\mathcal{R}$ the information that $C'$, $C_1$ and $C_2$, are connected with the same hanging subtree of $f$.  

%There are two things that we have to be careful about here. First, for every 2D-range query with $\mathit{2D\_range\_i}$, we have to properly translate the original DFS numbers into those of $T_i$. And second, the segment $S_i$ of the children of $f$ in $L_i(f)$ whose $\mathit{low}_i$ point lies in $C$ may contain vertices that are not roots of hanging subtrees. Thus, we have to carefully partition $S_i$ into subsegments that consist of such roots only. This is not difficult to do, neither does it affect the time-bound for updates, because the children of $f$ that are not roots of hanging subtrees are ancestors of failed vertices, and there are at most $d$ failed vertices in total.

The algorithm that constructs enough Type-2 edges to make $\mathcal{R}$ a connectivity graph of the internal components of $T\setminus F$ is given in Algorithm~\ref{algorithm:type-2-edges}. The proof of correctness and time complexity is given in Proposition~\ref{proposition:algorithm-2-correctness}.

\begin{algorithm}[!h]
\caption{\textsf{Compute enough Type-2 edges to construct a connectivity graph for the internal components of $T\setminus F$}}
\label{algorithm:type-2-edges}
\LinesNumbered
\DontPrintSemicolon
\ForEach{failed vertex $f$}{
  \label{alg:line_for_f}
  \textcolor{cred}{\tcp{process all pairs of internal components that are ancestors of $f$}}
  set $f'\leftarrow \mathit{parent}_F(f)$\;  
  \While{$f'\neq\bot$}{
    let $C'$ be the internal component with $f'\in\partial(C')$\;
    \textcolor{cblue}{\tcp{skip the following if $C'$ does not exist, and go immediately to Line~{$26$}}}
    \ForEach{$i\in\{1,\dots,d\}$}{
      \label{alg:line_for_S}
      let $\mathcal{S}_i$ be the collection of all maximal segments of $L_i(f)$ that consist of roots of hanging subtrees with their $\mathit{low_i}$ point in $C'$\;
      \label{alg:line_maximal_segments}
    }
    \textcolor{cred}{\tcp{process all internal components $C$ that are ancestors of $f$ and descendants of $C'$}}
    set $f''\leftarrow f$\;
    \While{$f''\neq f'$}{
    \label{line:while_boundary}
      let $C$ be the internal component with $f''\in\partial(C)$\;
      \textcolor{cblue}{\tcp{skip the following if $C$ does not exist, and go immediately to Line~{$24$}}}
      \textcolor{cred}{\tcp{check if $C$ is connected with $C'$ through at least one hanging subtree of $f$}}
      \ForEach{$i\in\{1,\dots,d\}$}{
        \ForEach{$S\in\mathcal{S}_i$}{
        \label{line:S_for}
          let $L\leftarrow\mathit{min}(S)$ and $R\leftarrow\mathit{max}(S)$\;
          \If{$\mathtt{2D\_range\_i}([L,R+\mathit{ND}(R)-1]_i\times [r_{C},p(f'')]_i)=\mathit{true}$}{
          \label{line:2d_query}
            add the Type-2 edge $(r_C,r_{C'})$ to $\mathcal{R}$\;
          }
        }
      }
      $f''\leftarrow\mathit{parent}_F(f'')$\;
      \label{alg:line_next_f}
    }
    $f'\leftarrow\mathit{parent}_F(f')$\;
    \label{alg:line_next_f'}
  }
}
\end{algorithm}

\begin{proposition}
\label{proposition:algorithm-2-correctness}
Algorithm~\ref{algorithm:type-2-edges} computes enough Type-2 edges to construct a connectivity graph $\mathcal{R}$ for the internal components of $T\setminus F$ (supposing that $\mathcal{R}$ contains all Type-1 edges). The running time of this algorithm is $O(d^4\log n)$.
\end{proposition}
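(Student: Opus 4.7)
The plan is to split the proof of Proposition~\ref{proposition:algorithm-2-correctness} into two parts: \textbf{correctness}, i.e.\ that the edges added by Algorithm~\ref{algorithm:type-2-edges}, together with the Type-1 edges from Algorithm~\ref{algorithm:type-1-edges}, make $\mathcal{R}$ a connectivity graph in the sense of Definition~\ref{definition:R}; and the \textbf{running-time} bound of $O(d^4\log n)$.

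For correctness, I will check two directions. The easy direction is that every edge $(r_C,r_{C'})$ added in Line~\ref{line:2d_query} is between internal components truly connected in $G\setminus F$, so property~$(3)$ of Definition~\ref{definition:R} is satisfied: a positive query yields a back-edge from some hanging subtree $H'$ (rooted inside $S$, hence already joined to $C'$ through $\mathit{low}_i(r_{H'})\in C'$) into the range $[r_C,p(f'')]$, which by Lemma~\ref{lemma:ancestor_c_back-edge} lies in $C$; thus $H'$ witnesses the connection of $C$ and $C'$. The harder direction is to show that for every hanging subtree $H$ of any failed vertex $\hat{f}$ connecting internal components $C_1,\ldots,C_k$ with $C_k=C^{*}$ the ancestor of all the others (by Lemma~\ref{lemma:components_through_back-edges}), the roots $r_{C_1},\ldots,r_{C_k}$ end up in the same connected component of $\mathcal{R}$. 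I plan to focus on the iteration $f=\hat{f}$, $f'=f^{*}$, where $f^{*}\in\partial(C^{*})$ is the boundary on the F-forest path leaving $\hat{f}$, and establish three points. First, some $i\le d$ satisfies $\mathit{low}_i(r_H)\in C^{*}$: back-edges from $H$ with lower end strictly above $r_{C^{*}}$ can land only on failed vertices (else $H$ would connect to a still higher internal component, contradicting the maximality of $C^{*}$), and there are at most $d-1$ such failed vertices; moreover the smallest lower end $\ge r_{C^{*}}$ must lie in $C^{*}$ because the $C^{*}$-portion $[r_{C^{*}},p(f^{*})]$ of the path precedes in DFS order any lower component or failed vertex on it. Second, for this $i$, $r_H\in S_i(C^{*})$ and, being a hanging-subtree root, it belongs to some maximal segment $S\in\mathcal{S}_i$. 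Third, for every $C_j\ne C^{*}$, the inner loop reaches $f''=b_j\in\partial(C_j)$ with $C=C_j$ (since $b_j$ precedes $f^{*}$ on the F-forest path, as $C_j$ is a proper descendant of $C^{*}$), and at that moment the query in Line~\ref{line:2d_query} is positive: the subtree of $r_H$ sits in $[L,R+\mathit{ND}(R)-1]_i$ by the construction of $T_i$ from the lists $L_i$, and $H$'s back-edge to $C_j$ has its lower end in $[r_{C_j},p(b_j)]$ by Lemma~\ref{lemma:ancestor_c_back-edge}. This adds the edge $(r_{C_j},r_{C^{*}})$ and links every $r_{C_j}$ to $r_{C^{*}}$ in $\mathcal{R}$, exactly what Lemma~\ref{lemma:maintaining_components} requires.

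For the running time, the triple $(f,f',f'')$ ranges over $O(d^3)$ values and each $\mathtt{2D\_range\_i}$ query costs $O(\log n)$, so the goal is to bound the total number of queries by $O(d^4)$. The naive bound $|\mathcal{S}_i|=O(d)$ per $(f,f',i)$ would only yield $O(d^5\log n)$, and the main obstacle I expect is precisely the amortization that avoids this loss. Call a $T$-child of $f$ \emph{bad} if it is failed or the root of an internal component; by Lemma~\ref{lemma:ic-info}$(1)$ there are at most $O(d)$ bad children per $f$. A segment in $\mathcal{S}_i$ is a maximal run inside $L_i(f)$ of non-bad children whose $\mathit{low}_i$ lies in $C'$, whence $|\mathcal{S}_i|\le 1+|\{b\text{ bad}:\mathit{low}_i(b)\in C'\}|$. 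Crucially, distinct values of $f'$ in the middle loop correspond to pairwise disjoint internal components $C'$ (the F-forest path from $\hat{f}$ meets each internal component's boundary at most once), so for each fixed pair (bad $b$, index $i$) the vertex $\mathit{low}_i(b)$ lies in $C'$ for at most one $f'$; summing gives $\sum_{f'}\sum_i|\mathcal{S}_i|=O(d^2)$ per $f$. Multiplying by the $O(d)$ values of $f''$ per $(f,f')$ and by the $O(d)$ choices of $f$ yields $O(d^4)$ queries overall, i.e.\ $O(d^4\log n)$ time. The remaining bookkeeping — precomputing each $L_i(f)$, locating the $C'$-range by binary search, identifying the bad children inside it via the F-forest and level-ancestor structure, and iterating the boundaries of each internal component — fits within the same budget.

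I expect this amortization step to be the main obstacle, since the per-$(f,f')$ bound $\sum_i|\mathcal{S}_i|=O(d^2)$ is loose by a factor of $d$; the argument hinges on swapping the order of summation between $f'$ and the (bad, index) pair, and on the fact that $\mathit{low}_i(b)$ lives in at most one internal component. The correctness proofs, while detailed, are essentially mechanical consequences of Lemma~\ref{lemma:ancestor_c_back-edge} and the $\mathit{low}_i$ observation already laid out in the overview preceding Algorithm~\ref{algorithm:type-2-edges}.
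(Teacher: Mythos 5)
Your proposal is correct and follows essentially the same route as the paper: the correctness argument rests on the same two pillars (some $\mathit{low}_i(r_H)$ with $i\le d$ must land in the lowest connected internal component $C^{*}$, and Lemma~\ref{lemma:ancestor_c_back-edge} certifies each query), and your running-time amortization --- bounding $\sum_{f'}\sum_i|\mathcal{S}_i|=O(d^2)$ per $f$ via the disjointness of the components $C'$ and the $O(d)$ ``bad'' children of $f$ --- is the same counting as the paper's, merely reorganized by swapping the order of summation. The only piece you gesture at rather than carry out is the concrete $O(d^3+d^2\log n)$ implementation of Line~\ref{alg:line_maximal_segments}, but the ingredients you name (binary search for the $C'$-range, F-forest plus level ancestors for the bad children) are exactly the ones the paper uses.
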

\begin{proof}
By definition, it is sufficient to prove the following: for every failed vertex $f$, and every hanging subtree $H$ of $f$, let $C'$ be the lowest internal component that is connected with $H$ through a back-edge; then, for every internal component $C\neq C'$ that is connected with $H$ through a back-edge, there is an edge $(r_C,r_{C'})$ added to $\mathcal{R}$. And conversely: that these are all the Type-2 edges that are added to $\mathcal{R}$, and that any other edge $(r_C,r_{C'})$ that is added to $\mathcal{R}$ with Algorithm~\ref{algorithm:type-2-edges} has the property that $C$ and $C'$ are connected with the same hanging subtree through back-edges.

So let $f$ be a failed vertex, let $H$ be a hanging subtree of $f$, and let $C'$ be the lowest internal component that is connected with $H$ through a back-edge. Let us assume that $f\notin\partial(C')$. (Otherwise, there is no internal component $C$ that is an ancestor of $H$ and a descendant of $C'$, and therefore $H$ does not induce any Type-2 edges.) %Notice that, repectively, the \textbf{while} loop in Line~\ref{line:while_boundary} will not be executed.) 
Let $C\neq C'$ be an internal component that is connected with $H$ through a back-edge. Since $C'$ is the lowest internal component that is connected with $H$ through a back-edge, by the analysis above (in the main text) we have that there is an $i\in\{1,\dots,d\}$ such that $\mathit{low}_i(r_H)\in C'$. Thus, we may consider the maximal segment $S$ of $L_i(f)$ that contains $r_H$ and consists of roots of hanging subtrees whose $\mathit{low}_i$ point is in $C'$. Let $L$ and $R$ be the minimum and the maximum, respectively, of $S$. By construction of $T_i$, we have that $S$ is sorted in increasing order w.r.t. the $\mathit{DFS}_i$ numbering. Thus, the interval $[L,R+\mathit{ND}(R)-1]_i$ consists of the descendants of the vertices in $S$ in $T_i$. Since the vertices in $T_i$ have the same ancestry relation as in $T$, we have that the set $\mathit{DFS}_i(S)$ consists of children of $f$ in $T_i$ that are roots of hanging subtrees with $\mathit{low}_i$ in $C'$. %(In particular, it includes $r_H$.)

Now let $f''$ be the boundary vertex of $C$ that is an ancestor of $f$.  Then we also have that $f''$ is the boundary vertex of $C$ that is an ancestor of $r_H$ (since $f=p(r_H)$). Thus, Lemma~\ref{lemma:ancestor_c_back-edge} implies that there is a back-edge from $H$ to $[r_C,p(f'')]$. Therefore, there is also a back-edge (w.r.t. $T_i$) from $\mathit{DFS}_i(H)$ to $[\mathit{DFS}_i(r_C),\mathit{DFS}_i(p(f''))]$. This implies that the 2D-range query in Line~\ref{line:2d_query} is true, and therefore the Type-2 edge $(r_C,r_{C'})$ will be correctly added to $\mathcal{R}$. It is not difficult to see that the converse is also true: whenever the 2D-range query in Line~\ref{line:2d_query} is true, we can be certain that there is a hanging subtree of $f$ that is connected through a back-edge with both $C'$ and $C$.

Let us analyze the running time of Algorithm~\ref{algorithm:type-2-edges}. 
First, we will provide a method to implement Line~\ref{alg:line_maximal_segments}, i.e., how to find, for every internal component $C'$ that is an ancestor of $f$, and every $i\in\{1,\dots,d\}$, the collection $\mathcal{S}_i$ of the maximal segments of $L_i(f)$ that consist of roots of hanging subtrees whose $\mathit{low}_i$ points lie in $C'$. There are many ways to do this, but for the sake of simplicity we will provide a relatively straightforward method that incurs total cost $O(d^3+d^2\log n)$. The idea is to collect the children of $f$, at the beginning of the \textbf{for} loop in Line~\ref{alg:line_for_f}, that are ancestors of failed vertices. To do this, we collect the failed vertices $f_1,\dots,f_k$ that are children of $f$ in the F-forest, and then we perform a level-ancestor query (in $T$) for every $f_i$ to find the child $c_i$ of $f$ that is an ancestor of $f_i$. %We denote the collection $c_1,\dots,c_k$ as $\mathcal{C}$. 
Then we keep $d$ copies, $\mathcal{C}_1,\dots,\mathcal{C}_d$, of the collection $\{c_1,\dots,c_k\}$. (We note that some $c_i,c_j$, for $i,j\in\{1,\dots,k\}$ with $i\neq j$, may coincide. We ignore those repetitions.) For every $i\in\{1,\dots,d\}$, we let $\mathcal{C}_i$ be sorted in increasing order w.r.t. the $\mathit{low}_i$ points. Now, at the beginning of the \textbf{for} loop in Line~\ref{alg:line_for_S}, we can use binary search to find in $O(\log n)$ time the (endpoints of the) segment $S$ of $L_i(f)$ that consists of all children of $f$ with their $\mathit{low}_i$ point in $C'$. Let $L$ and $R$ be the minimum and the maximum, respectively, of $S$. %(w.r.t the $\mathit{DFS}_i$ numbering). 
Then we traverse the list $\mathcal{C}_i$, and, for every $c\in\mathcal{C}_i$ that we meet, we check whether $c$ is in $S$. (This is done by simply checking whether $L\leq c\leq R$.) If that is the case, then we collect the (possibly empty) subsegment $[L,c-1]$ (i.e., the pair of its endpoints), and we remove $[L,c]$ from $S$ (i.e., we set $L\leftarrow c+1$). We repeat this process while traversing $\mathcal{C}_i$ until we reach its end, and finally we collect the (possibly empty) remainder of $S$ (i.e., the pair of its endpoints). The collection of the non-empty subsegments we have gathered is precisely $\mathcal{S}_i$.

The cost of this method is as follows. For every failed vertex $f$, we need time analogous to its number of children in the F-forest to create the collection $\{c_1,\dots,c_k\}$. Then we make $d$ copies of this collection, and we perform a sorting in every one of them. This takes time $O(d\cdot k\log k)$, where $k$ is the number of children of $f$ in the F-forest. Since this is performed for every failed vertex, it incurs total cost $O(d^2\log d)$. Now, for every failed vertex $f$, and every internal component $C'$ that is an ancestor of $f$, we need $O(\log n)$ time to find the segment $S$, as described above. This is how we get an additional $O(d^2\log n)$ cost in total. Now, for this $f$ and $C'$, and for every $i\in\{1,\dots,d\}$, we have to traverse the list $\mathcal{C}_i$ as above (while performing operations that take constant time). Since the size of $\mathcal{C}_i$ equals the number of children of $f$ in the F-forest, this incurs cost $O(d^3)$ in total.

By the analysis above, we have that the total cost of Line~\ref{alg:line_maximal_segments} is $O(d^3+d^2\log n)$.\footnote{In RAM machines with $O(\log n)$ word size, we can use van Emde Boas trees in order to perform the binary searches above as predecessor/successor queries, and so we can reduce the ``$\log n$'' factor to ``$\log\log n$''.} Thus, it remains to upper bound the times that the 2D-range queries in Line~\ref{line:2d_query} are performed. To do this, we introduce the following notation. Let $f$ be a failed vertex, and let $\mathcal{C}$ denote the collection of the internal components that are ancestors of $f$. Then, for every $C'\in\mathcal{C}$, and every $i\in\{1,\dots,d\}$, we let $\mathcal{S}_i(C')$ denote the collection of the maximal segments of $L_i(f)$ that consist of roots of hanging subtrees of $f$ whose $\mathit{low}_i$ point lies in $C'$. Then we can see that the number of 2D-range queries in Line~\ref{line:2d_query} during the processing of $f$ (during the outer \textbf{for} loop in Line~\ref{alg:line_for_f}) is precisely $\sum_{C'\in\mathcal{C}}\sum_{C}\sum_{i\in\{1,\dots,d\}}|\mathcal{S}_i(C')|$ $(*)$, where the second sum is indexed over the internal components $C$ that are ancestors of $f$ and descendants of $C'$. 

Now fix an $i\in\{1,\dots,d\}$. For every $C'\in\mathcal{C}$, let $S_i(C')$ denote the the maximal segment of $L_i(f)$ that consists of children of $f$ whose $\mathit{low}_i$ point lies in $C'$. Notice that every segment in $\mathcal{S}_i(C')$ is contained entirely within $S_i(C')$. Since the internal components in $\mathcal{C}$ are pairwise disjoint, we have that the segments in $\{S_i(C')\mid C'\in\mathcal{C}\}$ are pairwise disjoint, and therefore their total number is bounded by $|\mathcal{C}|\leq d$. Since the number of failed vertices is $d$, the number of childen of $f$ that are ancestors of failed vertices is at most $d$. It is precisely the existence of those children that may force the segments in $\{S_i(C')\mid C'\in\mathcal{C}\}$ to be partitioned further in order to get $\bigcup\{\mathcal{S}_i(C')\mid C'\in\mathcal{C}\}$. But every such child breaks the segment $S_i(C')$, for a $C'\in\mathcal{C}$, into at most two subsegments. (Recall the analysis above that concerns the implementation of Line~\ref{alg:line_maximal_segments}.) Thus, the segments in $\{S_i(C')\mid C'\in\mathcal{C}\}$ must be partitioned at most $d$ times in order to get $\bigcup\{\mathcal{S}_i(C')\mid C'\in\mathcal{C}\}$. Thus, we have $\sum_{C'\in\mathcal{C}}|\mathcal{S}_i(C')|\leq d+d=O(d)$. This implies that $\sum_{i\in\{1,\dots,d\}}\sum_{C'\in\mathcal{C}}|\mathcal{S}_i(C')|= O(d^2)$, and therefore the expression $(*)$ can be bounded by $O(d^3)$. Since this is true for every failed vertex $f$, we can bound the number of the 2D-range queries in Line~\ref{line:2d_query} by $O(d^4)$.

\end{proof}

\subsection{Answering the queries}
\label{section:query}
Assume that we have constructed a connectivity graph $\mathcal{R}$ for the internal components of $T\setminus F$, and that we have computed its connected components. Thus, given two internal components $C$ and $C'$, we can determine in constant time whether $C$ and $C'$ are connected in $G\setminus F$, by simply checking whether $r_C$ and $r_{C'}$ are in the same connected component of $\mathcal{R}$ (see Lemma~\ref{lemma:maintaining_components}).  

Now let $x,y$ be two vertices in $V(G)\setminus F$. In order to determine whether $x,y$ are connected in $G\setminus F$, we try to 
substitute $x,y$ with roots of internal components of $T\setminus F$, and then we reduce the query to those roots. Specifically, if $x$ (resp., $y$) belongs to an internal component $C$ of $T\setminus F$, then the connectivity between $x,y$ is the same as that between $r_C,y$ (resp., $x,r_C$). Otherwise, if $x$ (resp., $y$) belongs to a hanging subtree $H$ of $T\setminus F$, then we try to find an internal component that is connected with $H$ through a back-edge. If such an internal component $C$ exists, then we can substitute $x$ (resp., $y$) with $r_C$. Otherwise, $x,y$ are connected in $G\setminus F$ if and only if they belong to the same hanging subtree of $T\setminus F$. This idea is shown in Algorithm~\ref{algorithm:query}.

\begin{algorithm}[!h]
\caption{$\mathtt{query}(x,y)$}
\label{algorithm:query}
\LinesNumbered
\DontPrintSemicolon
\If{$x$ lies in an internal component $C$ and $y$ lies in an internal component $C'$}{
  \lIf{$r_C$ is connected with $r_{C'}$ in $\mathcal{R}$}{\textbf{return true}}
  \textbf{return false}
}
\textcolor{cblue}{\tcp{at least one of $x,y$ lies in a hanging subtree}}
\If{$x$ lies in a hanging subtree $H$}{
\label{line:x_in_H}
  \textcolor{red}{\tcp{check whether $H$ is connected with an internal component through a back-edge}}
  \For{$i\in\{1,\dots,d\}$}{
  \label{line:for_low}
    \If{$\mathit{low}_i(r_H)\neq\bot$ \textbf{and} $\mathit{low}_i(r_H)\notin F$}{
      \textbf{return} $\mathtt{query}(\mathit{low}_i(r_H),y)$\;
    }
  }
  \textcolor{cblue}{\tcp{there is no internal component that is connected with $H$ in $G\setminus F$}} 
  \lIf{$y$ lies in $H$}{\textbf{return true}}
  \textbf{return false}
}
\textbf{return} $\mathtt{query}(y,x)$\;
\end{algorithm}

\begin{proposition}
\label{proposition:query}
Given two vertices $x,y$ in $V(G)\setminus F$, Algorithm~\ref{algorithm:query} correctly determines whether $x,y$ are connected in $G\setminus F$. The running time of Algorithm~\ref{algorithm:query} is $O(d)$.
\end{proposition}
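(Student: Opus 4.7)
My plan is to verify correctness by cases on whether each of $x,y$ lies in an internal component (IC) or a hanging subtree (HS) of $T\setminus F$, and then to bound both the per-call work and the recursion depth. If both $x$ and $y$ lie in internal components $C$ and $C'$, then Lemma~\ref{lemma:maintaining_components} reduces the answer to a reachability check between $r_C$ and $r_{C'}$ in $\mathcal{R}$, which is what the algorithm performs. If $y$ (but not $x$) lies in a hanging subtree, the recursive swap $\mathtt{query}(y,x)$ is sound by symmetry of connectivity, so the real work concentrates in the case $x\in H$ for some hanging subtree $H$ of $T\setminus F$.

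In that case the argument splits into two parts. First, if some $\mathit{low}_i(r_H)$ with $i\in\{1,\dots,d\}$ is defined and not in $F$, then it is a proper ancestor of $r_H$ that serves as the endpoint of a back-edge from $H$; moreover, its component in $T\setminus F$ has $p(r_H)\in F$ as a descendant, so it is an internal component $C$, and $H$ is connected to $C$ in $G\setminus F$ through that back-edge. Hence the recursive call with $\mathit{low}_i(r_H)$ substituted for $x$ preserves the connectivity question. The main obstacle, and the place where the loop bound $d$ is used crucially, is the converse: if every defined entry among $\mathit{low}_1(r_H),\dots,\mathit{low}_d(r_H)$ lies in $F$, I must conclude that $H$ is disconnected from every internal component. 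The pigeonhole step is this: if all $d$ of those entries are defined and lie in $F$, then they exhaust $F$ (because $|F|=d$), and since $p(r_H)\in F$ is the DFS-maximum proper ancestor of $r_H$, we must have $\mathit{low}_d(r_H)=p(r_H)$, which rules out the existence of any $\mathit{low}_{d+1}(r_H)$; if instead fewer than $d$ lowpoints exist, every one of them already lies in $F$ trivially. Either way, no back-edge from $H$ reaches a non-failed proper ancestor of $r_H$, and Lemma~\ref{lemma:components_through_back-edges} tells us that any inter-component back-edge incident to an HS must land in an ancestor internal component, so $H$ is itself a full connected component of $G\setminus F$. Therefore $x$ and $y$ are connected iff $y\in H$, which is exactly what the algorithm returns.

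For the time bound, each invocation performs $O(d)$ work: classifying a vertex as IC or HS and retrieving the root of its component reduces to finding the deepest ancestor of that vertex inside $F$ plus one level-ancestor query, which I would implement by a single linear pass through $F$; the connectivity check in $\mathcal{R}$ is constant time, the \textbf{for} loop has $d$ iterations of constant-time work, and the test $y\in H$ amounts to comparing component roots. The recursion depth is bounded by a small constant, since every non-terminal call either swaps the two arguments or replaces an HS endpoint by a vertex certified to lie in an IC; after at most one swap and at most two substitutions both arguments are in internal components, and the first (terminal) branch fires. Multiplying these bounds yields the claimed $O(d)$ query time.
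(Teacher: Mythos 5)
Your proof is correct and follows essentially the same route as the paper's: the same case analysis on internal components versus hanging subtrees, the same reduction of the hanging-subtree case to the low points of $r_H$, and the same constant-depth recursion argument. Your justification of the loop bound is phrased as a contrapositive pigeonhole (if all $d$ low points are defined and failed, they exhaust $F$, force $\mathit{low}_d(r_H)=p(r_H)$, and preclude any further low point), whereas the paper argues directly that some $\mathit{low}_i(r_H)$ with $i\le d$ lands in the lowest internal component connected to $H$; both rest on the same counting of failed ancestors and both are valid. The one step you assert without justification is that each iteration of the \textbf{for} loop takes $O(1)$ time: testing $\mathit{low}_i(r_H)\notin F$ naively costs $O(d)$ per iteration, so you need either to mark the failed vertices in a lookup array during the update phase, or (as the paper does) to exploit the fact that both $\mathit{low}_1(r_H),\dots,\mathit{low}_d(r_H)$ and the sorted list of $F$ are increasing, so the entire loop can be executed as a single merge in $O(d)$ total time.
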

\begin{proof}
To prove correctness, we only have to deal with the case that $x$ lies in a hanging subtree $H$ of $T\setminus F$ (Line~\ref{line:x_in_H}). In this case, we simply have to check whether there is an edge in $G\setminus F$ that connects $H$  with another connected component of $T\setminus F$. Thus, according to Lemma~\ref{lemma:components_through_back-edges}, we have to check whether $H$ is connected with an internal component of $T\setminus F$ through a back-edge. If such an internal component exists, let $C$ be the lowest among them. Since $p(r_H)$ is a failed vertex and the number of failed vertices that are ancestors of $H$ is bounded by $d$, we have that there is at least one $i\in\{1,\dots,d\}$ such that $\mathit{low}_i(r_H)$ lies in $C$. Thus, the connectivity query for $x,y$ in $G\setminus F$ is equivalent to that for $\mathit{low}_i(r_H),y$. Since $\mathit{low}_i(r_H)$ belongs to the internal component $C$, eventually the algorithm will terminate, and it will produce the correct result. 

Otherwise, if there is no internal component that is connected with $H$ through a back-edge, then $H$ is a connected component of $G\setminus F$, and so $y$ is connected with $x$ in $G\setminus F$ if and only if $y$ also lies within $H$.
Now, if the \textbf{for} loop in Line~\ref{line:for_low} has exhausted the search and either $\mathit{low}_d(r_H)$ does not exist, or $\mathit{low}_d(r_H)$ is a failed vertex, then we can be certain that there is no back-edge that connects $H$ with the rest of the graph $G\setminus F$. %Thus, $x$ is connected with $y$ in $G\setminus F$ if and only if $y$ also lies in $C$.

Now we will establish the $O(d)$ time-bound. First, given a vertex $x\notin F$, we can determine the connected component of $T\setminus F$ that contains $x$ by finding the nearest failed vertex $f$ that is an ancestor of $x$. This is done in $O(d)$ time by finding the maximum failed vertex that is an ancestor of $x$. If no such vertex exists, then $x$ belongs to the internal component with root $r$. Otherwise, the root of the connected component $C$ of $T\setminus F$ that contains $x$ is given by the child of $f$ that is an ancestor of $x$. This child is determined in constant time with a level-ancestor query for the ancestor of $x$ whose depth equals $\mathit{depth}(f)+1$. Now, given the root $r_C$ of a connected component $C$ of $T\setminus F$, we can determine in $O(d)$ time whether $C$ is an internal component or a hanging subtree of $T\setminus F$ by checking whether there is a failed vertex that is a descendant of $r_C$. Finally, we can perform the search in Line~\ref{line:for_low} in $O(d)$ time, if we have the list of failed vertices sorted in increasing order. (We can have this done with an extra cost of $O(d\log d)$ during the update phase.) Then, we can easily check in $O(d)$ time whether there is an $i\in\{1,\dots,d\}$ such that $\mathit{low}_i(r_C)\in F$, because the list $\mathit{low}_1(r_C),\dots,\mathit{low}_d(r_C)$ is also sorted in increasing order. 
\end{proof}

\paragraph{Acknowledgements.} I would like to thank my advisor, Loukas Georgiadis, for helpful comments on this manuscript.

\bibliography{lipics-v2021-sample-article}

\begin{thebibliography}{1}

\bibitem{DBLP:journals/tcs/BenderF04}
Michael~A. Bender and Martin Farach{-}Colton.
\newblock The level ancestor problem simplified.
\newblock {\em Theor. Comput. Sci.}, 321(1):5--12, 2004.
\newblock \href {https://doi.org/10.1016/j.tcs.2003.05.002}
  {\path{doi:10.1016/j.tcs.2003.05.002}}.

\bibitem{DBLP:conf/compgeom/ChanLP11}
Timothy~M. Chan, Kasper~Green Larsen, and Mihai P{\u{a}}tra{\c{s}}cu.
\newblock Orthogonal range searching on the ram, revisited.
\newblock In {\em Proceedings of the 27th {ACM} Symposium on Computational
  Geometry,}, pages 1--10, 2011.
\newblock \href {https://doi.org/10.1145/1998196.1998198}
  {\path{doi:10.1145/1998196.1998198}}.

\bibitem{DBLP:books/lib/BergCKO08}
Mark de~Berg, Otfried Cheong, Marc~J. van Kreveld, and Mark~H. Overmars.
\newblock {\em Computational geometry: algorithms and applications, 3rd
  Edition}.
\newblock Springer, 2008.
\newblock URL: \url{https://www.worldcat.org/oclc/227584184}.

\bibitem{DBLP:books/daglib/0030488}
Reinhard Diestel.
\newblock {\em Graph Theory, 4th Edition}, volume 173 of {\em Graduate texts in
  mathematics}.
\newblock Springer, 2012.

\bibitem{DBLP:journals/siamcomp/DuanP20}
Ran Duan and Seth Pettie.
\newblock Connectivity oracles for graphs subject to vertex failures.
\newblock {\em {SIAM} J. Comput.}, 49(6):1363--1396, 2020.
\newblock \href {https://doi.org/10.1137/17M1146610}
  {\path{doi:10.1137/17M1146610}}.

\bibitem{DBLP:conf/focs/LongS22}
Yaowei Long and Thatchaphol Saranurak.
\newblock Near-optimal deterministic vertex-failure connectivity oracles.
\newblock In {\em 63rd {IEEE} Annual Symposium on Foundations of Computer
  Science, {FOCS}}, pages 1002--1010, 2022.
\newblock \href {https://doi.org/10.1109/FOCS54457.2022.00098}
  {\path{doi:10.1109/FOCS54457.2022.00098}}.

\bibitem{DBLP:journals/algorithmica/NagamochiI92}
Hiroshi Nagamochi and Toshihide Ibaraki.
\newblock A linear-time algorithm for finding a sparse k-connected spanning
  subgraph of a k-connected graph.
\newblock {\em Algorithmica}, 7(5{\&}6):583--596, 1992.
\newblock \href {https://doi.org/10.1007/BF01758778}
  {\path{doi:10.1007/BF01758778}}.

\bibitem{DBLP:conf/icalp/PilipczukSSTV22}
Michal Pilipczuk, Nicole Schirrmacher, Sebastian Siebertz, Szymon Torunczyk,
  and Alexandre Vigny.
\newblock Algorithms and data structures for first-order logic with
  connectivity under vertex failures.
\newblock In {\em 49th International Colloquium on Automata, Languages, and
  Programming, {ICALP}}, volume 229 of {\em LIPIcs}, pages 102:1--102:18, 2022.
\newblock \href {https://doi.org/10.4230/LIPIcs.ICALP.2022.102}
  {\path{doi:10.4230/LIPIcs.ICALP.2022.102}}.

\bibitem{DBLP:journals/siamcomp/Tarjan72}
Robert~Endre Tarjan.
\newblock Depth-first search and linear graph algorithms.
\newblock {\em {SIAM} J. Comput.}, 1(2):146--160, 1972.
\newblock \href {https://doi.org/10.1137/0201010} {\path{doi:10.1137/0201010}}.

\end{thebibliography}

\end{document}